\documentclass[sigconf,final]{acmart}

\usepackage[utf8]{inputenc}
\usepackage[T1]{fontenc}
\usepackage{url}
\usepackage{booktabs} 

\usepackage[shortlabels, inline]{enumitem}

\usepackage{amsthm}

\usepackage{mathrsfs}
 \floatstyle{boxed}
 \newfloat{algo}{h}{loa}
\floatname{algo}{{\textnormal{Algo}}}

\usepackage[ruled,linesnumbered,vlined]{algorithm2e}

\usepackage{stmaryrd}

\DeclareMathOperator{\val}{val}
\DeclareMathOperator{\tr}{trace}

\newcommand{\openr}[1]{\ensuremath{[\![#1[\![} }
\newcommand{\wb}[1]{\ensuremath{\overline{#1}}}

\def\Q{\ensuremath{\mathbb{Q}}}
\def\Z{\ensuremath{\mathbb{Z}}}
\def\F{\ensuremath{\mathbb{F}}}
\def\R{\ensuremath{\mathbb{R}}}

\def\M{\ensuremath{\mathsf{M}}}

\def\abs{\ensuremath{\mathrm{abs}}}
\def\rel{\ensuremath{\mathrm{rel}}}

\newcommand{\N}{\mathbb N}
\newcommand{\Zp}{\Z_p}
\newcommand{\Qp}{\Q_p}
\newcommand{\Fp}{\mathbb{F}_p}
\newcommand{\OK}{\mathcal{O}_K}
\newcommand\Qt{\Q\llbracket T \rrbracket}
\newcommand\xs{x^\star}
\newcommand{\tran}{\!^t}

\newtheorem{lem}{Lemma}[section]
\newtheorem{prop}[lem]{Proposition}
\newtheorem{theo}[lem]{Theorem}
\newtheorem{cor}[lem]{Corollary}
\newtheorem{conj}[lem]{Conjecture}
\newtheorem{defn}[lem]{Definition}
\newtheorem{ass}[lem]{Assumption}
\theoremstyle{remark}
\newtheorem{rmk}[lem]{Remark}

\definecolor{answer}{rgb}{0,0.5,0.2}

\definecolor{todo}{rgb}{0,0.2,0.5}
\newcommand{\xav}[1]{\textcolor{todo}{{\bf Xav:} #1}}

\setcopyright{rightsretained}

\acmDOI{}

\acmISBN{}

\acmConference[ISSAC'20]{The 45th International Symposium on Symbolic and Algebraic Computation}{July 2020}{Kalamata, Greece} 
\acmYear{2020}
\copyrightyear{2020}

\acmPrice{15.00}

\begin{document}
\title{On A Non-Archimedean Broyden Method}

\author{Xavier Dahan}
\orcid{0001-6042-6132}
\affiliation{%
  \institution{Tohoku University,  IEHE}
  \city{Sendai} 
  \state{Japan} 
  \postcode{980-8576}
}
 \email{xdahan@gmail.com}

\author{
Tristan Vaccon}
  \affiliation{Universit\'e de Limoges;
  \institution{CNRS, XLIM UMR 7252}
  \city{Limoges, France}  
  \postcode{87060}  
  }
  \email{tristan.vaccon@unilim.fr}

\begin{abstract}
Newton's method is an ubiquitous tool to solve equations,
both in the archimedean and 
non-archimedean settings
--- for which it does not really differ. 
Broyden was the instigator of what is called
``quasi-Newton methods''. These methods
use an iteration step where one does not
need to compute a complete Jacobian
matrix nor its inverse.
We provide an adaptation of 
Broyden's method in a general non-archimedean setting,
compatible with the lack of inner product,
and study its Q and R convergence.
We prove that our adapted method
converges at least Q-linearly
and R-superlinearly with
R-order $2^{\frac{1}{2m}}$ in dimension $m.$
Numerical data are provided.
\end{abstract}

%
%



\keywords{
System of equations,
Broyden's method,
Quasi-Newton,
p-adic approximation,
Power series,
Symbolic-numeric,
p-adic algorithm
}
\maketitle

\section{Introduction}
\vspace{-1mm}
\paragraph{In the numerical world.} 
Quasi-Newton methods refer to a class of variants of Newton's method
for solving square nonlinear systems,
 with the twist that  the inverse of the Jacobian matrix is ``approximated''
by another matrix. When compared to Newton's method, they benefit from a cheaper update at each iteration (See {\em e.g.} \cite[p.49-50, 53]{DM77}), but suffer from a smaller rate of convergence.
They were mainly introduced by
 Broyden in~\cite{Broyden65},
which has sparked numerous
improvements, generalizations, and variants
(see the surveys~\cite{DM77, martinez2000}).
It is now a fundamental numerical tool (that finds
its way in entry level numerical analysis textbooks~\cite[\S~10.3]{BuFa9}).
To some extent, this success stems from: the specificities of machine precision arithmetic
as commonly used in the numerical community,
the fact that Newton's method is usually not quadratically
convergent from step one,
and that the arithmetic cost of an iteration is independent of the
quality of the approximation reached.
In another direction, variants of Broyden's method have known
dramatic success for unconstrained optimization ---
the target system is the gradient of the objective function, the zeros
are then critical points---
where it takes advantage  of the special structure of the Hessian (see Sec. 7 of \cite{DM77}).
Another  appealing feature  of Broyden's method is the possibility to
design derivative-free methods
generalizing to the multivariate case the classical secant method (which
can be thought of as Broyden's in dimension one). This feature is a main motivation for this work.

\vspace{-7pt}

\paragraph{Non-archimedean.}
It is a natural wish to
transpose such a fundamental numerical method to the non-archimedean
framework, offering new tools to perform {\em exact} computations,
typically for systems with $p$-adic or power series coefficients.
For this adaptation, several non-trivial difficulties have to be overcome: \textit{e.g.} no inner products, a more difficult proof of convergence, or a management
of arithmetic at finite precision far more subtle.
This article presents satisfactory solutions for all these difficulties,
which we believe can be expanded to a broader variety of
  quasi-Newton methods.


Bach proved in \cite{Bach09} that in dimension one, the secant method
can be on an equal footing with Newton's method in terms of complexity.
We investigate how this comparison is less engaging in superior dimension
(see Section~\ref{sec:precision_finie}).
To our opinion, this is due
to the remarkable behavior of  Newton's method in the non-archimedean setting.
No inversion of the Jacobian is required at each iteration
(simply a matrix multiplication,
this is now classical see~\cite{kung1974computing, brent1978fast, kung1978all}).
The evaluation of the Jacobian is also efficient for polynomial functions
(in dimension $m$, it involves only $O(m)$ evaluations, instead of $m^2$
over $\R$, see~\cite{baur1983complexity}).
It displays also true quadratic behavior from step one
which, when combined with the natural use of finite precision arithmetic (against
machine precision over $\R$),
offers a ratio cost/precision gained that is hard to match.

And indeed, our results show that for large dimension $m$ and polynomials as input,
there is little hope for Broyden to outperform Newton,
although it  depends on the order of superlinear convergence of Broyden's method.
In this respect more investigation is necessary, but for now the interest
  lies more in  the theoretical advances
  and in the  situations mentioned in   ``{\em Motivations}'' thereafter.

\vspace{-7pt}

\paragraph{Relaxed arithmetic} Since the cost of one iteration
of Broyden's method involves $m^2$ instead of $m^\omega$ for Newton,
we should mention the {\em relaxed} framework (a.k.a online~\cite{FiSt74}) which show essentially
the same decrease of complexity, while maintaining quadratic convergence.
It has been implemented efficiently for power series~\cite{vdH02relax},
and for $p$-adic numbers~\cite{berthomieu2011}.
In case of a smaller $m$ and a larger precision of approximation required,
 FFT
trading~\cite{van2010newton} has to be mentioned.
These techniques are however unlikely  to be suited to the Broyden iteration,
since it is  {\em a priori} not described by a fixed-point equation, 
a necessity for the relaxed machinery.

\vspace{-8pt}

\paragraph{Motivations}
As explains Remark~\ref{rmk:NvsB}, it seems unlikely in the non-archimedean world 
that with polynomials or rational fractions, a quasi-Newton method
meets the standard of Newton's method.
The practical motivations concern:

1/ Derivative-free method:
instead of starting with the Jacobian at precision one,  use  a divided-difference matrix.
A typical application is  when the function is given by a ``black-box''
  and there is no direct access to the Jacobian.
  
2/  When computing the  Jacobian does not allow shortcuts like in the case of rational fractions~\cite{baur1983complexity},
evaluating it may require up to  $ L m^2$ operations, where $L$ is the complexity of evaluation of the input function.
Regarding the complexity of Remark~\ref{rmk:NvsB},
Broyden's method then becomes   beneficial when $L \gtrsim m^2 - m^{\omega-1}  $.
  
  3/ 
  While Newton's method over general Banach spaces of infinite dimension can be made effective when the differential is effectively representable (integral equations~\cite[\S~5]{KelleySachs1990}\cite{kelley1985broyden}
  are a typical example),
  it is in general difficult or impossible to compute it.
  On the other hand, Broyden's method or its variants have the ability to work with approximations of the
  differential, including of {\em  finite rank}, by considering a projection
  (as shown in~\cite{KelleySachs1990,kelley1985broyden} and the references therein;
  the dimension of the projection is increased at each iteration).
  In the non-archimedean context, 
  ODEs with parameters, for example initial conditions, constitute a natural application.

\vspace{-5pt}

\paragraph{Organization of the paper} Definitions and notations are introduced in Section \ref{sec:notations}.
Section \ref{sec:adaptation} explains how Broyden's method can be 
adapted to an ultrametric setting.
In Section \ref{sec:convergence}, we study the Q and R-order of convergence of Broyden's method (see Definition \ref{defn:Q_R_CV}), presenting our main results.
It is followed by Section \ref{sec:Qsuperlinearity}, where are  introduced
developments and conjectures on Q-superlinearity. 
Finally, in Section \ref{sec:precision_finie}, we explain how our Broyden's method can
be implemented with dynamical handling of the precision, and we conclude
with some numerical data in Section \ref{sec:data}.

\vspace{-1mm}
\section{Broyden's method and notations}
\label{sec:notations}
\vspace{-1mm}
\subsection{General notations}

Throughout the paper, $K$ refers to a complete,
discrete valuation field, $\val : K \twoheadrightarrow \Z \cup \{+\infty\}$ to its valuation,
$\OK$ its ring of integers and $\pi$ a uniformizer.\footnote{Discrete valuation is only
needed in Section \ref{sec:precision_finie}. 
For the rest complete and ultrametric is enough.} For $k \in \N$, we write $O(\pi^k)$ for $\pi^k \OK.$

Let $m \in \Z_{\geq 1}$. We are interested in computing
an approximation of a non-singular zero $\xs$ of $f : K^m \rightarrow K^m$
through an iterative sequence of approximations, $(x_n)_{n \in \N} \in (K^m)^\N.$
Note that all our vectors are column-vectors.
For any $x \in K^m$ where it is well-defined, we denote by $f'(x) \in M_m(K)$ 
the Jacobian matrix of $f$ at $x.$ 
We will use the following notations (borrowed from \cite{Gay79}): 
\begin{equation}
f_n = f(x_n), \quad y_n = f_{n+1} - f_{n},
\quad 
s_n = x_{n+1} -x_{n}
\end{equation}
We denote by $(e_1,\dots, e_m)$ the canonical basis
of $K^m.$ In $K^m,$ $O(\pi^k)$
means $O(\pi^k)e_1+\dots+ O(\pi^k)e_m.$

Newton's iteration produces a sequence $(x_n)_{n \in \N}$ 
given by:
\vspace{-1mm}
\begin{equation}
    x_{n+1} = x_n - f'(x_n)^{-1} \cdot f(x_n). \tag{N} \label{eqn:Newton}
\end{equation}
\vspace{-1mm}
For quasi-Newton methods, the iteration is given by:
\begin{equation}
    x_{n+1} = x_n - B_n^{-1} \cdot f(x_n), 
\quad (\Rightarrow s_n = -B_n^{-1} \cdot f_n)  \tag{QN} \label{eqn:QNewton}
\end{equation}
with $B_n$ presumably not far from $f'(x_n).$
More precisely, it is a generalization
of the design of the secant method over $K$
where one approximates $f'(x_n)$ by $\frac{f(x_n)-f(x_{n-1})}{x_n-x_{n-1}}.$
In quasi-Newton, it is thus required that:
\vspace{-1mm}
\begin{equation}
    B_n \cdot (x_n -x_{n-1})=f(x_n)-f(x_{n-1}) \quad 
(\Rightarrow B_n\cdot s_{n-1} = y_{n-1})  \label{eqn:Bn_secante}
\end{equation}
By this condition alone, $B_n$ is obviously underdetermined.
To mitigate this issue,  $B_n$ is taken as a one-dimensional modification of $B_{n-1}$
satisfying \eqref{eqn:Bn_secante}. Concretely,
a sequence $(u_n)_{n \in \N} \in (K^m)^\N$ is introduced such that:
\begin{align}
    B_n &=   B_{n-1}+(y_{n-1}-B_{n-1} s_{n-1}) \cdot u_{n-1}\tran. \label{eqn:def_Bn} \\
     1 &= u_{n-1}\tran \cdot s_{n-1}  \label{eqn:unsn_eq_1}.
\end{align}
In Broyden's method over $\R$, $u_{n-1}$ is defined by:
\begin{equation}
    u_{n-1} = \frac{s_{n-1}}{ s_{n-1} \tran \cdot s_{n-1}  }. \label{eqn:un_over_R}
\end{equation}
The computation of the inverse of $B_n$
can then be done using the Sherman-Morrison formula (see \cite{SM49}):
\begin{equation}
  B_n^{-1} = B_{n-1}^{-1}+\frac{(s_{n-1}-B_{n-1}^{-1} y_{n-1})\cdot 
s_{n-1}\tran B_{n-1}^{-1}} { s_{n-1}\tran B_{n-1}^{-1}y_{n-1}} .  \label{eqn:comp_inv_Bn}
\end{equation}

This formula gives rise to the so-called ``good Broyden's method''. Using \cite{SM49} provides the following alternative formulae:
\begin{align}
    B_n &=   B_{n-1}+f_n \cdot u_{n-1}\tran. \\ 
B_n^{-1} &= B_{n-1}^{-1}-\frac{B_{n-1}^{-1} f_{n}\cdot 
u_{n-1}\tran B_{n-1}^{-1}} {u_{n-1}\tran B_{n-1}^{-1}y_{n-1}} . \label{eqn:comp_inv_Bn_with_un_variante_yn1}
\end{align}
\vspace{-1mm}
\vspace{-1mm}
\vspace{-1mm}
\vspace{-1mm}
\vspace{-1mm}
\subsection{Convergence}
We recall some notions on convergence of sequences
commonly used in the analysis of the behavior of Broyden's method.
\begin{defn}[\cite{OR} Chapter 9]
A sequence $(x_k)_{k \in \N} \in (K^m)^\N$
has Q-order of convergence $\mu\in \R_{> 1}$ to a limit $\xs \in K^m$,
if:
\[
\exists r \in \R_+, \quad \forall k \text{ large enough,} 
\quad \frac{\Vert x_{k+1}-\xs \Vert}{\Vert x_k-\xs \Vert^\mu } \leq r. 
\]
If we can take $\mu=1$ and $r<1$ in the previous inequality,
we say that $(x_k)_{k \in \N} $ has Q-linear convergence.
For $\mu=2$, we say it has Q-quadratic convergence.
The sequence is said to have Q-superlinear convergence if \vspace{-.2cm}
\[\lim_{k \rightarrow + \infty } \frac{\Vert x_{k+1}-\xs \Vert}{\Vert x_k-\xs \Vert } =0. \]
It is said to have R-order of convergence\footnote{R-convergence is a weaker notion, aimed at sequences not monotonically decreasing.} $\mu \in \R_{\geq 1}$
if
\[\limsup \Vert x_k-\xs \Vert^{ 1 /\mu^k}<1.\] \label{defn:Q_R_CV}
\end{defn} \vspace{-.5cm}
\begin{rmk}
For both Q and R, 
we write \textit{has convergence} $\mu$ to mean \textit{has convergence at least} $\mu.$
\end{rmk}
Broyden's  method  satisfies the following convergence results:
\begin{theo}
Over $\R^m,$ under usual regularity assumptions, Broyden's method defined by 
Eq.~\eqref{eqn:un_over_R} converges locally\footnote{By locally, we mean 
that for any $x_0$ and $B_0$ in small enough
balls around $\xs$ and $f'(\xs)$, the following convergence property
is satisfied.} Q-superlinearly \cite{BDM},
exactly in $2m$ steps for linear systems,
and with R-order at least $2^{\frac{1}{2m}}>1$ \cite{Gay79}. 
\end{theo}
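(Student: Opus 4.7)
All three statements are classical over $\R$, so the goal is to sketch the standard arguments rather than reproduce them in full. Throughout, I assume the usual regularity: $f$ is $C^1$ near $\xs$, $f(\xs)=0$, $f'(\xs)$ is invertible, and $f'$ is Lipschitz in a neighborhood of $\xs$. I will use the error vectors $e_n = x_n-\xs$ and the matrix error $E_n = B_n - f'(\xs)$.

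\textbf{Step 1: bounded deterioration.} The key structural fact is that the Broyden update \eqref{eqn:def_Bn}–\eqref{eqn:un_over_R} is the projection of $B_{n-1}$ onto the affine subspace $\{B : B s_{n-1} = y_{n-1}\}$ in the Frobenius norm. From this and the secant approximation $y_{n-1} = f'(\xs)s_{n-1} + O(\max(\|e_{n-1}\|,\|e_n\|))\|s_{n-1}\|$, one derives the bounded deterioration inequality
\[
\|E_n\|_F \;\le\; \|E_{n-1}\|_F \,+\, c\,(\|e_{n-1}\|+\|e_n\|),
\]
which keeps $B_n$ close to $f'(\xs)$ throughout the iteration, provided $x_0$ and $B_0$ start close enough to $\xs$ and $f'(\xs)$.

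\textbf{Step 2: Q-linear then Q-superlinear convergence.} Combining Step 1 with the identity $e_{n+1} = -B_n^{-1}\bigl(f_n - f'(\xs)e_n\bigr) + B_n^{-1}E_n e_n$ and Lipschitzness of $f'$, one extracts Q-linear convergence with an arbitrarily small rate by choosing the initial balls small. Q-superlinearity is then obtained via the Dennis–Moré characterization: one shows that the projections $\|E_n s_n\|/\|s_n\|$ are square-summable, which forces $\|E_n s_n\|/\|s_n\|\to 0$ and hence $\|e_{n+1}\|/\|e_n\|\to 0$. The square-summability comes from telescoping the Frobenius identity
\[
\|E_n\|_F^2 \;=\; \|E_{n-1}\|_F^2 \,-\, \frac{\|E_{n-1}s_{n-1}\|^2}{\|s_{n-1}\|^2}\,+\,\text{(deterioration terms)},
\]
using that the deterioration terms are summable thanks to the Q-linear rate. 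This is the step I expect to be the main technical obstacle, since controlling the interaction between bounded deterioration and the variable metric in the projection identity is delicate.

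\textbf{Step 3: finite termination on linear systems.} When $f(x)=Ax-b$, one has $y_k = As_k$ exactly, so the secant condition forces $B_{k+1}s_k = A s_k$. Setting $G_k = B_k-A$, the update $G_{k+1} = G_k - G_k s_k s_k\tran/\|s_k\|^2$ is an orthogonal projection annihilating the direction $s_k$. Tracking the subspace on which $G_k$ vanishes, together with the fact that the iterates $s_k$ span increasingly large subspaces, gives the sharp bound of at most $2m$ steps (a classical counting argument of Gay, which distinguishes the $m$ "informative" steps from the $m$ steps needed to reduce the residual on the already-covered subspace).

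\textbf{Step 4: R-order $2^{1/(2m)}$.} Following Gay, one establishes a recursion of the form
\[
\|e_{k+1}\| \;\le\; C \,\|e_k\|\,\|e_{k-1}\|\cdots\|e_{k-2m+1}\|
\]
for $k$ large enough. Taking logarithms reduces the asymptotic analysis to a linear recurrence whose characteristic polynomial $t^{2m} - t^{2m-1} - \cdots - 1$ has a dominant root bounded below by $2^{1/(2m)}$, yielding $\limsup \|e_k\|^{1/\rho^k} < 1$ with $\rho = 2^{1/(2m)}$. The recursion itself is obtained by combining Step 1 with the $2m$-step termination of Step 3 applied to the linearized system at $\xs$.
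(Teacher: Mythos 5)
This theorem is a recall of classical results over $\R$; the paper itself does not prove it but cites \cite{BDM} for Q-superlinearity and \cite{Gay79} for the linear-case termination and the R-order. What the paper does do is adapt the relevant proofs to the non-archimedean setting in Section~4, so the natural comparison is with those. Your Steps~1--3 are a reasonable sketch of the standard route (bounded deterioration, Dennis--Mor\'e characterization via the telescoping Frobenius identity, and Gay's counting argument for the $2m$-step termination on affine maps), and they parallel the ideas the paper imports.

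Step~4, however, contains a genuine error. The recursion Gay establishes (and the one the paper reproduces non-archimedean\-ly as its Theorem on R-order) is \emph{not} a product over the last $2m$ errors; it is the $2m$-step quadratic improvement
\[
\Vert e_{w+2m}\Vert \;\le\; \Gamma\,\Vert e_w\Vert^2 \qquad\text{for all }w,
\]
obtained by coupling the actual Broyden iterates to the ``shadow'' iterates produced by running Broyden on the linearization $\hat f(x)=f'(\xs)(x-\xs)$ from the same $(x_w,B_w)$: by Step~3 the shadow sequence hits $\xs$ in at most $2m$ steps, and one shows inductively that the real and shadow iterates (and their $B$-matrices) stay $O(\Vert e_w\Vert^2)$ apart. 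From this bound the exponent sequence satisfies $\mu_{k+2m}=2\mu_k$, whose characteristic equation is $\rho^{2m}=2$, yielding exactly $\rho=2^{1/(2m)}$. Your proposed recursion $\Vert e_{k+1}\Vert\le C\,\Vert e_k\Vert\cdots\Vert e_{k-2m+1}\Vert$ has characteristic polynomial $t^{2m}-t^{2m-1}-\cdots-1$ whose dominant root tends to $2$ as $m\to\infty$, far above $2^{1/(2m)}$; this is not what Gay proves, it is not a consequence of the $2m$-step termination, and there is no reason to believe it holds for $m>1$ (it would in fact be a dramatically stronger statement). So the key idea you are missing in Step~4 is the comparison to the linearized shadow trajectory; the recursion and the characteristic-root analysis you wrote do not match the theorem's claim or the known proof.
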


Unfortunately, for general $K,$
Eq.~\eqref{eqn:un_over_R} is not a good fit.
Indeed, the quadratic form 
$x\mapsto x\tran x$ can be isotropic over $K^m$,
\textit{i.e.} there can be an $s_n \neq 0$ such that $s_n \tran \cdot s_n=0$.
This is the case, 
for example if $s_n = (X,X)$ in $\F_2 \llbracket X \rrbracket^2.$
Consequently, \eqref{eqn:un_over_R} has to be modified.
Trying to seek for another quadratic form 
that would not be isotropic is pointless, since for example there is none
over $\Qp^m$  for $m\ge5$~\cite{CoursSerre}.

\begin{rmk}
In the sequel, all the $B_i$'s will be invertible matrices.
Consequently, $s_{n+1}=0$ if and only if $f(x_{n})=0.$
We therefore adopt 
 the convention that if for some
$x_n,$ we have $\ f(x_n)=0,\ $
then the
sequences $(x_v)_{v\geq n}$ and 
$(B_v)_{v\geq n}$
will be  constant, and this case
does not require any further development. \label{rem:zero_and_constancy}
\end{rmk}

\section{Non-archimedean adaptation}
\label{sec:adaptation}
\subsection{Norms}

We use the following natural (non-normalized) norm
on $K$ defined from its valuation: for any $x \in K,$ 
$\Vert x \Vert = 2^{-\val (x)},$
except for $K=\Qp,$ where we take the more natural
$p^{-\val(x)}$ over $\Q_p$. 
Our norm\footnote{Over $\R$, it is of course
denoted by $\Vert \cdot \Vert_{\infty}$, but when based on  a non-archimedean
absolute value, this notation
is not used since it is implicitly unambiguous:
other norms such as the $\Vert \cdot \Vert_p$
are mostly useless.} on $K$ can naturally be extended to $K^m$: for any $x=(x_1,\dots, x_m) \in K^m,$
$\Vert x \Vert = \max_i \vert x_i \vert.$ We denote by $\val (x)$ the minimal valuation
among the $\val (x_i)$'s. It defines the norm of $x.$

\begin{lem}
Let $\biginterleave \cdot \biginterleave$ be the norm on $M_m(K)$
induced by $\Vert \cdot \Vert$. Let us abuse notations by
denoting with $\Vert \cdot \Vert$ the \textit{max}-norm
on the coefficients of the matrices of $M_m(K)$.
Then $\biginterleave \cdot \biginterleave=\Vert \cdot \Vert.$ \label{lem:Matrix_norms}
\end{lem}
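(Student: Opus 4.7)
The plan is to prove both inequalities $\biginterleave A \biginterleave \leq \Vert A \Vert$ and $\biginterleave A \biginterleave \geq \Vert A \Vert$ for an arbitrary $A = (a_{ij}) \in M_m(K)$, where $\Vert A \Vert = \max_{i,j} |a_{ij}|$ denotes the max-norm of the coefficients.

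For the upper bound, I would take any nonzero $x = (x_1,\dots,x_m) \in K^m$ and bound $\Vert Ax \Vert$ coordinate-wise. Each coordinate is $(Ax)_i = \sum_j a_{ij} x_j$, and the crucial step is to apply the \emph{ultrametric} (strong) triangle inequality to this sum: $|(Ax)_i| \leq \max_j |a_{ij} x_j| \leq \max_{i,j} |a_{ij}| \cdot \max_j |x_j| = \Vert A \Vert \cdot \Vert x \Vert$. Taking the max over $i$ gives $\Vert Ax \Vert \leq \Vert A \Vert \cdot \Vert x \Vert$, hence $\biginterleave A \biginterleave \leq \Vert A \Vert$.

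For the lower bound, I would exhibit a specific vector realizing the max-norm. Choose indices $(i_0, j_0)$ such that $|a_{i_0 j_0}| = \Vert A \Vert$ and plug in the canonical basis vector $x = e_{j_0}$, which satisfies $\Vert e_{j_0} \Vert = 1$. Then $Ae_{j_0}$ is the $j_0$-th column of $A$, so $\Vert A e_{j_0} \Vert = \max_i |a_{i j_0}| \geq |a_{i_0 j_0}| = \Vert A \Vert$. This yields $\biginterleave A \biginterleave \geq \Vert A e_{j_0} \Vert / \Vert e_{j_0} \Vert \geq \Vert A \Vert$, completing the equality.

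There is no real obstacle here: the statement is the non-archimedean analogue of the familiar fact that the operator norm induced by $\Vert \cdot \Vert_\infty$ on $\R^m$ equals the maximum absolute row-sum, except that the ultrametric inequality collapses the sum to a max, making the formula simpler (coefficient-wise max rather than row-sum max). The only mild subtlety worth flagging is the use of the strong triangle inequality — it is essential, and without it one only gets the usual bound $\biginterleave A \biginterleave \leq m \Vert A \Vert$ that would spoil the identity.
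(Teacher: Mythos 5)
Your proof is correct and follows essentially the same route as the paper's: the upper bound via the strong triangle inequality applied coordinate-wise, and the lower bound by evaluating $A$ on the canonical basis vector $e_{j_0}$ whose column contains a maximal coefficient. You merely spell out the ultrametric step in slightly more detail than the paper, which states it as "clear."
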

\begin{proof}
Let $A \in M_n(K).$
If $x \in K^m $ is such that $\Vert x \Vert \leq 1,$
then by ultrametricity, it is clear that $\Vert Ax \Vert \leq \Vert A \Vert,$ hence $\biginterleave A \biginterleave \leq \Vert A \Vert.$
If $i \in \N$ is such that $\Vert A \Vert$ is obtained with a coefficient
on the column of index $i$, then $\Vert A e_i \Vert = \Vert A \Vert$,
whence the equality.
\end{proof}

Consequently, the \textit{max}-norm on the coefficients
of a matrix is a matrix norm.
For rank-one matrices, the computation of 
the norm can be made easy using the following corollary of Lemma \ref{lem:Matrix_norms}.

\begin{cor}
Let $a,b  \in K^m$ be two vectors.
Then
\begin{equation}
    \Vert  a \hspace{.3mm} \tran \cdot b \Vert = \Vert a \Vert \cdot \Vert b \Vert.
\end{equation} \label{cor:rankone_norm}
\end{cor}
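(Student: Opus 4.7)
The plan is a one-line verification based directly on Lemma \ref{lem:Matrix_norms}. I would first identify the rank-one matrix at stake as the outer product $a \cdot b\tran \in M_m(K)$, whose $(i,j)$-entry is the scalar $a_i b_j$ (this is the only reading consistent with the phrase ``For rank-one matrices'' preceding the statement). Applying Lemma \ref{lem:Matrix_norms}, its operator norm equals its max-coefficient norm, so $\Vert a\cdot b\tran\Vert = \max_{i,j} |a_i b_j|$. Since the absolute value on $K$ is multiplicative, each entry splits as $|a_i b_j| = |a_i|\cdot|b_j|$, and a maximum of such doubly indexed products factors as the product of the individual maxima, yielding $\max_{i,j} |a_i|\cdot|b_j| = \bigl(\max_i |a_i|\bigr)\bigl(\max_j |b_j|\bigr) = \Vert a\Vert\cdot\Vert b\Vert$.

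No genuine obstacle is anticipated: the only ingredients are multiplicativity of $|\cdot|$ on $K$ and Lemma \ref{lem:Matrix_norms}, and no ultrametric inequality is used beyond what the lemma already encapsulates. I would nevertheless take the opportunity to emphasise that this exact equality is specific to the rank-one \emph{outer} product. For the inner product $\sum_i a_i b_i$, ultrametric cancellation among the summands can force the absolute value to be strictly smaller than $\Vert a\Vert\cdot\Vert b\Vert$; this is precisely the phenomenon that forbids importing the choice \eqref{eqn:un_over_R} of $u_{n-1}$ from the archimedean case, and which the subsequent section will be devoted to circumventing.
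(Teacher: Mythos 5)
Your proof is correct and matches the paper's (unwritten) intent: the corollary is stated without proof as an immediate consequence of Lemma~\ref{lem:Matrix_norms}, and the two ingredients you use --- identification of the operator norm with the max-coefficient norm, and multiplicativity of the ultrametric absolute value on $K$ --- are exactly what that label implies. Your reading of the matrix as the rank-one outer product with $(i,j)$-entry $a_i b_j$ is the only one consistent both with the sentence preceding the corollary and with how it is later invoked (to bound $\Vert B_1 - B_0\Vert$ for $B_1 - B_0 = (y_1 - B_0 s_1)\cdot u_1\tran$), and your closing remark on ultrametric cancellation in the inner product correctly ties back to why \eqref{eqn:un_over_R} fails over $K$.
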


\vspace{-1mm}
\vspace{-1mm}
\vspace{-1mm}
\vspace{-1mm}
\vspace{-1mm}
\vspace{-1mm}
\vspace{-1mm}
\vspace{-1mm}
\subsection{Constraints and optimality}

For the sequence $(x_n)_{n \in \N}$ to be well defined,
the sequence $(u_n)_{n \in \N}$ must satisfy
Eqs~\eqref{eqn:def_Bn}-\eqref{eqn:unsn_eq_1} and also:
\begin{equation}
    s_n\tran B_{n}^{-1}y_n \neq 0, 
\end{equation}
to ensure Eq.~\eqref{eqn:comp_inv_Bn} makes sense.
Many different $u_n$'s can satisfy those conditions.
Over $\R,$ Broyden's choice of $u_n$ defined by \eqref{eqn:un_over_R}
can be characterized by minimizing the Frobenius norm of
$B_{n+1} - B_{n}.$
We can proceed similarly over $K.$

\begin{lem}
If $B_{n+1}$ satisfies \eqref{eqn:Bn_secante}, then:
\begin{equation}
\Vert B_{n + 1} -B_{n} \Vert \geq \frac{\Vert y_n-B_n s_n \Vert }{\Vert s_n \Vert}. \label{eqn:norm_bn_minus_bn1}
\vspace{-1mm}
\vspace{-1mm}
\vspace{-1mm}
\end{equation} \label{lem:norm_bn_minus_bn1}
\vspace{-1mm}
\end{lem}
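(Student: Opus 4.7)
The plan is short because the inequality is essentially an instance of submultiplicativity, once the secant condition is used to rewrite the right-hand side.

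First I would apply the secant condition \eqref{eqn:Bn_secante} at index $n+1$, namely $B_{n+1}\cdot s_n = y_n$, in order to rewrite the numerator on the right as an image of $s_n$:
\[
  y_n - B_n s_n = B_{n+1} s_n - B_n s_n = (B_{n+1} - B_n)\cdot s_n.
\]
This reduction is the only place where the hypothesis is used.

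Next, I would invoke Lemma~\ref{lem:Matrix_norms} to identify the coefficient \emph{max}-norm $\Vert\cdot\Vert$ on $M_m(K)$ with the operator norm $\biginterleave\cdot\biginterleave$ induced by $\Vert\cdot\Vert$ on $K^m$. By definition of the operator norm, for any matrix $A$ and any nonzero $v\in K^m$ one has $\Vert A v\Vert \le \biginterleave A \biginterleave\cdot \Vert v \Vert$, hence applying this to $A=B_{n+1}-B_n$ and $v=s_n$ yields
\[
  \Vert (B_{n+1}-B_n)\cdot s_n\Vert \;\le\; \Vert B_{n+1}-B_n\Vert\cdot\Vert s_n\Vert .
\]
Combining with the previous equality and dividing by $\Vert s_n\Vert$ gives the claim.

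The only point requiring care is the division by $\Vert s_n\Vert$, which presupposes $s_n\ne 0$. But by Remark~\ref{rem:zero_and_constancy}, $s_n=0$ forces $f(x_n)=0$ and the sequence is constant from step $n$, so the inequality is trivial (both sides vanish) in that degenerate case and we may safely assume $s_n\ne 0$. I do not anticipate any real obstacle; the whole argument rests on Lemma~\ref{lem:Matrix_norms} which was established precisely to make such submultiplicative estimates transparent in the ultrametric setting.
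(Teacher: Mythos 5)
Your proof is correct and follows essentially the same route as the paper's (one-line) proof: rewrite $y_n - B_n s_n = (B_{n+1}-B_n)s_n$ using the secant condition, then apply the operator-norm bound from Lemma~\ref{lem:Matrix_norms}. The only small inaccuracy is the remark that "both sides vanish" when $s_n=0$; the right-hand side is then a $0/0$ expression rather than $0$, but this is immaterial since Remark~\ref{rem:zero_and_constancy} already excludes that degenerate case from consideration.
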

\vspace{-1mm}
\begin{proof}
It is clear as in this case, $(B_{n + 1} -B_{n})s_n=y_n-B_n s_n.$
\end{proof}
This inequality  can become an equality with a suitable choice
of $u_n$ as shown in the following lemma.
\begin{lem}\label{lem:1}
Let $l$ be such that $\val (s_{n,l})=\val (s_n). $
Then \[u_n = s_{n,l}^{-1} e_l\] satisfies \eqref{eqn:unsn_eq_1}
and reaches the bound in \eqref{eqn:norm_bn_minus_bn1}.
\end{lem}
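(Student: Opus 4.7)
The plan is to verify the two assertions by direct computation, using only the rank-one norm formula of Corollary \ref{cor:rankone_norm} and the definition of the $\max$-norm on $K^m$.

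First, I would check equation \eqref{eqn:unsn_eq_1}. Since $e_l\tran \cdot s_n = s_{n,l}$, the candidate $u_n = s_{n,l}^{-1} e_l$ satisfies
\[
u_n\tran \cdot s_n \;=\; s_{n,l}^{-1} \, e_l\tran \cdot s_n \;=\; s_{n,l}^{-1} \cdot s_{n,l} \;=\; 1,
\]
which requires $s_{n,l} \neq 0$; this is ensured by the hypothesis $\val(s_{n,l}) = \val(s_n)$ (which is finite as long as $s_n \neq 0$, a case handled by Remark \ref{rem:zero_and_constancy}).

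Next, I would compute $\Vert u_n \Vert$. Since $u_n$ has a single nonzero coordinate equal to $s_{n,l}^{-1}$, we have $\Vert u_n \Vert = \Vert s_{n,l}^{-1} \Vert = 2^{\val(s_{n,l})}$, and the choice of $l$ precisely forces this to equal $2^{\val(s_n)} = \Vert s_n \Vert^{-1}$. Then, shifting indices in equation \eqref{eqn:def_Bn} to match the statement of Lemma \ref{lem:norm_bn_minus_bn1}, the update reads
\[
B_{n+1} - B_n \;=\; (y_n - B_n s_n) \cdot u_n\tran,
\]
so by Corollary \ref{cor:rankone_norm},
\[
\Vert B_{n+1} - B_n \Vert \;=\; \Vert y_n - B_n s_n \Vert \cdot \Vert u_n \Vert \;=\; \frac{\Vert y_n - B_n s_n \Vert}{\Vert s_n \Vert},
\]
which is exactly the lower bound of \eqref{eqn:norm_bn_minus_bn1}.

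There is no real obstacle here: the whole lemma reduces to observing that the $\max$-norm on $K^m$ is attained on some coordinate, and that inverting that coordinate yields a vector whose norm is the reciprocal of $\Vert s_n \Vert$, which is exactly what Corollary \ref{cor:rankone_norm} needs to turn the inequality of Lemma \ref{lem:norm_bn_minus_bn1} into an equality. The only point worth flagging is that the archimedean choice \eqref{eqn:un_over_R} rescaled $s_n$ itself, whereas here the non-archimedean analogue picks out a single distinguished coordinate of $s_n$; this reflects the fact that over $K^m$ the norm, although ultrametric, is attained coordinate-wise and does not come from an inner product.
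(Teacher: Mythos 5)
Your proof is correct, and since the paper states Lemma \ref{lem:1} without an explicit proof, your argument is precisely the routine verification the authors intended the reader to supply: check $u_n\tran \cdot s_n = 1$ directly, compute $\Vert u_n \Vert = \Vert s_n \Vert^{-1}$ from the choice of $l$, and invoke Corollary \ref{cor:rankone_norm} to turn the inequality of Lemma \ref{lem:norm_bn_minus_bn1} into an equality. Your closing remark contrasting the archimedean rescaling of $s_n$ in \eqref{eqn:un_over_R} with the non-archimedean selection of a distinguished coordinate is a useful observation that goes slightly beyond what is needed but captures the reason this lemma exists at all.
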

Nevertheless, this is not enough to have $B_n$
invertible in general, as we can see from the 
Sherman-Morrison formula~\eqref{eqn:comp_inv_Bn_with_un_variante_yn1}:
\begin{lem}
$B_n$ defined by Eq.\eqref{eqn:def_Bn} is invertible
if and only if 
\begin{equation}
    u_{n-1} \tran B_{n-1}^{-1} y_{n-1} \neq 0. \label{eqn:cond_sherman_morrison}
\end{equation}
\end{lem}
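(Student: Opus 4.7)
The plan is to use the rank-one structure of the update \eqref{eqn:def_Bn}, together with the constraint \eqref{eqn:unsn_eq_1}, and reduce invertibility of $B_n$ to a single scalar condition. I will take for granted (by induction, or as a standing hypothesis of the algorithm) that $B_{n-1}$ is invertible; this is consistent with Remark~\ref{rem:zero_and_constancy} and the usual running assumption on the sequence $(B_i)_i$.

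The cleanest route is the matrix determinant lemma: for an invertible matrix $A$ and vectors $u,v$, $\det(A+vu\tran)=\det(A)\bigl(1+u\tran A^{-1}v\bigr)$. This identity is purely algebraic, so it holds over $K$ without modification. Applying it with $A=B_{n-1}$, $v=y_{n-1}-B_{n-1}s_{n-1}$ and $u=u_{n-1}$ gives
\begin{equation*}
\det(B_n)=\det(B_{n-1})\bigl(1+u_{n-1}\tran B_{n-1}^{-1}y_{n-1}-u_{n-1}\tran s_{n-1}\bigr).
\end{equation*}
Using the constraint \eqref{eqn:unsn_eq_1} that $u_{n-1}\tran s_{n-1}=1$, the expression collapses to $\det(B_n)=\det(B_{n-1})\cdot u_{n-1}\tran B_{n-1}^{-1}y_{n-1}$. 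Since $\det(B_{n-1})\neq 0$, the equivalence is immediate.

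If the reviewer prefers a proof that avoids determinants, I would argue directly on the kernel. Starting from $B_n w=0$ one derives $w=\alpha(s_{n-1}-B_{n-1}^{-1}y_{n-1})$ with $\alpha=u_{n-1}\tran w$; applying $u_{n-1}\tran$ to both sides and using \eqref{eqn:unsn_eq_1} yields $\alpha\cdot u_{n-1}\tran B_{n-1}^{-1}y_{n-1}=0$. When \eqref{eqn:cond_sherman_morrison} holds, this forces $\alpha=0$ and hence $w=0$, so $B_n$ is injective and therefore invertible. Conversely, if $u_{n-1}\tran B_{n-1}^{-1}y_{n-1}=0$, then $w:=s_{n-1}-B_{n-1}^{-1}y_{n-1}$ is a nonzero element of the kernel: it is nonzero because $u_{n-1}\tran w=1-0=1\neq 0$ (again by \eqref{eqn:unsn_eq_1}), and a short verification gives $B_n w = B_{n-1}s_{n-1}-y_{n-1}+(y_{n-1}-B_{n-1}s_{n-1})=0$.

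There is no real obstacle here: the statement is algebraic and ultrametricity plays no role. The only mild care needed is that the matrix determinant lemma must be invoked over a general field $K$, and that the ``only if'' direction requires exhibiting an explicit kernel vector, for which the identity $u_{n-1}\tran s_{n-1}=1$ ensures non-triviality. I would present the determinant version as the main proof and mention the constructive kernel argument as a remark, since the latter is useful when later discussing the Sherman--Morrison formula \eqref{eqn:comp_inv_Bn_with_un_variante_yn1}.
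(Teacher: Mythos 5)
Your proof is correct, and it fills in what the paper leaves implicit: the paper states this lemma without a proof block, merely pointing at the Sherman--Morrison formula~\eqref{eqn:comp_inv_Bn_with_un_variante_yn1} in the preceding sentence. Your matrix-determinant-lemma computation, $\det(B_n)=\det(B_{n-1})\cdot u_{n-1}\tran B_{n-1}^{-1}y_{n-1}$ after simplifying with $u_{n-1}\tran s_{n-1}=1$, is exactly the determinant-side companion of Sherman--Morrison and is the natural way to make the paper's assertion rigorous; as you note, it is field-agnostic so nothing in the non-archimedean setting needs adjusting. The constructive kernel argument is a nice self-contained alternative that avoids determinants entirely and exhibits the kernel vector $s_{n-1}-B_{n-1}^{-1}y_{n-1}$ in the degenerate case, which makes the converse direction transparent; presenting the determinant version as the main line and the kernel version as a remark, as you suggest, is a sensible editorial choice. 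The one thing to make explicit in the write-up is the running hypothesis that $B_{n-1}$ is already invertible, which you correctly flag; in the paper this is justified later by Corollary~\ref{cor:one_case} via the Banach Perturbation Lemma.
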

The next lemma shows how to choose $l$, up to the condition
$ (B_{n-1}^{-1} y_{n-1})_l \neq 0$,
which actually never occurs after Corollary~\ref{cor:one_case}.
\begin{lem}
Let $l$ be the {\em smallest} index such that $\val (s_{n,l})=\val (s_n). $
If $ \left( B_{n-1}^{-1} y_{n-1} \right)_l \neq 0,$ then
\vspace{-8pt}
\begin{equation}
    u_n = s_{n,l}^{-1} \, e_l \label{eqn:def_simple_un}
\end{equation}
\vspace{-4pt}
 satisfies Eq.~\eqref{eqn:unsn_eq_1},
reaches the bound in Eq.~\eqref{eqn:norm_bn_minus_bn1} and satisfies
Eq.\eqref{eqn:cond_sherman_morrison}.
\end{lem}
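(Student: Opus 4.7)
The plan is to verify the three assertions in sequence, the first two being routine and the third being where the hypothesis enters non-trivially.

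First, the condition $u_n^\top s_n = 1$ of \eqref{eqn:unsn_eq_1} is immediate from the definition: since $e_l^\top s_n = s_{n,l}$, one has $u_n^\top s_n = s_{n,l}^{-1} \cdot s_{n,l} = 1$. The entry $s_{n,l}$ is nonzero because $l$ attains the valuation of $s_n$, and by Remark~\ref{rem:zero_and_constancy} we may assume $s_n \neq 0$ (otherwise the iteration is already terminal and nothing further need be shown).

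Second, for attaining the bound of Lemma~\ref{lem:norm_bn_minus_bn1}, I would use the shifted form $B_{n+1} - B_n = (y_n - B_n s_n)\, u_n^\top$ of \eqref{eqn:def_Bn}. This is a rank-one matrix, so Corollary~\ref{cor:rankone_norm} yields $\|B_{n+1}-B_n\| = \|y_n - B_n s_n\| \cdot \|u_n\|$. Since $\val(s_{n,l}) = \val(s_n)$ we have $\|u_n\| = |s_{n,l}|^{-1} = \|s_n\|^{-1}$, so the right-hand side equals the lower bound $\|y_n - B_n s_n\|/\|s_n\|$ displayed in \eqref{eqn:norm_bn_minus_bn1}, and equality holds. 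This reuses the argument of Lemma~\ref{lem:1} essentially verbatim, so no new ingredient is needed.

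Third, the Sherman--Morrison non-degeneracy \eqref{eqn:cond_sherman_morrison} is where the hypothesis comes into play and where I expect the main difficulty to be. Substituting $u_n = s_{n,l}^{-1} e_l$ turns the left-hand side into $s_{n,l}^{-1}$ times the $l$-th coordinate of the relevant inverse-times-vector product, so the condition reduces to the non-vanishing of that coordinate. The plan is to expand $B_n^{-1}$ in terms of $B_{n-1}^{-1}$ using~\eqref{eqn:comp_inv_Bn_with_un_variante_yn1}, then use the secant relations $B_n s_{n-1} = y_{n-1}$ and $s_n = -B_n^{-1} f_n$ to rewrite the $l$-th coordinate in a form where $(B_{n-1}^{-1} y_{n-1})_l$ appears explicitly. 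The delicate point, and the main obstacle, is to verify by ultrametricity that the dominant valuation contribution is controlled by the hypothesized non-vanishing quantity and cannot be cancelled by the strictly smaller terms of the rank-one correction; the choice of $l$ as the \emph{smallest} such index should be what pins down the coordinate bookkeeping and prevents accidental cancellation.
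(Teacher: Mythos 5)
Your verifications of the first two claims are correct and are exactly what is intended: $u_n^\top s_n = s_{n,l}^{-1}s_{n,l} = 1$, and the rank-one norm identity of Corollary~\ref{cor:rankone_norm} together with $\Vert u_n\Vert = \vert s_{n,l}\vert^{-1} = \Vert s_n\Vert^{-1}$ gives equality in~\eqref{eqn:norm_bn_minus_bn1}.

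The third part, however, is not the ``main difficulty'' you anticipate --- it is a one-line substitution, and your proposed plan is both unnecessary and would not close. With $u_n = s_{n,l}^{-1}e_l$, the Sherman--Morrison quantity is
\[
u_n^\top B_n^{-1} y_n \;=\; s_{n,l}^{-1}\,\bigl(B_n^{-1}y_n\bigr)_l,
\]
and since $s_{n,l}\neq 0$ (it realizes $\val(s_n)$, and $s_n\neq 0$ by Remark~\ref{rem:zero_and_constancy}), this is nonzero precisely when $(B_n^{-1}y_n)_l\neq 0$. That is the lemma's hypothesis: the $n{-}1$ versus $n$ offset you see in the statement is a typographical inconsistency inherited from the mixed indexing of~\eqref{eqn:def_Bn}--\eqref{eqn:cond_sherman_morrison}, and the indices are meant to match within one iteration. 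There is no need to expand $B_n^{-1}$ via~\eqref{eqn:comp_inv_Bn_with_un_variante_yn1}, no secant relations, and no ultrametric cancellation analysis; indeed with the indices read literally as different, $(B_{n-1}^{-1}y_{n-1})_l\neq 0$ carries no information about $u_n^\top B_n^{-1}y_n$, so your proposed route could not succeed. The adjective ``smallest'' merely makes the choice of $l$ canonical (which is used later, in Step~6.4 of Theorem~\ref{thm:R_order}); it plays no anticancellation role here, and the fact that the hypothesis is generically satisfied is handled separately in Corollary~\ref{cor:one_case} via the Banach Perturbation Lemma.
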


\section{Local Convergence}
\label{sec:convergence}
\subsection{Local Linear convergence}

Let $E$ and $F$ be two finite-dimensional normed vector spaces over
$K$ We denote by $L(E,F)$ the space of $K$-linear mappings 
from $E$ to $F$.

\begin{defn}\label{defn:diff}
Let $U$ be an open subset of $E$.
A function $f : U \to F$ is \emph{strictly differentiable} at $x \in U$ if there exists 
an $f'(x) \in L(E,F)$ satisfying the
following property: for all $\varepsilon > 0$, there 
exists a neighborhood $U_{x,\varepsilon} \subset U$ of $x$, on which
for any $y,\ z\in U_{x,\epsilon}$:
\begin{equation}
\Vert f(z) - f(y) - f'(x) \cdot (z{-}y)\Vert_F \leq  \varepsilon \cdot 
\Vert z{-}y\Vert_E.
\end{equation}
\end{defn}
Note that both $z$ and $y$ can vary.
This property is natural in the ultrametric 
context (see 3.1.3 of \cite{caruso2017}), as
the counterpart of Fr\'echet differentiability over $\mathbb{R}$
does not provide meaningful local information.
Polynomials and converging power series satisfy
strict differentiability everywhere they are defined.

We can then adapt Theorem 3.2 of \cite{BDM} in our ultrametric setting.

\begin{theo}
Let $f : K^m \to K^m$ and $\xs \in U$ be such that $f$ is strictly differentiable at $\xs,$
$f'(\xs)$ is invertible and $f(\xs)=0.$
Then any quasi-Newton method whose choice of $u_n$
yields
for all $n,$ $\Vert u_n \Vert = \Vert s_n \Vert^{-1}$ (which includes Broyden's choice of Eq. \eqref{eqn:def_simple_un}), 
is locally $Q$-linearly converging to $\xs$ with ratio $r$ for
any $r \in (0,1).$ \label{thm:conv_lineaire}
\end{theo}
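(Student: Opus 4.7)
The plan is to adapt the classical Broyden--Dennis--Mor\'e bounded-deterioration argument, taking advantage of the ultrametric triangle inequality to streamline the Euclidean estimates. Fix a target ratio $r \in (0,1)$, set $\kappa = \Vert f'(\xs)^{-1}\Vert$, and pick $\eta > 0$ with $\eta < \kappa^{-1}$ and $\kappa \eta \leq r$. Applying Definition~\ref{defn:diff} at $\xs$ with this $\eta$ yields a neighborhood $U_{\xs,\eta}$ of $\xs$ on which $\Vert f(z) - f(y) - f'(\xs)\cdot(z-y)\Vert \leq \eta \Vert z-y\Vert$ for all $y,z$. The local hypothesis is that $x_0 \in U_{\xs,\eta}$ and $\Vert B_0 - f'(\xs)\Vert \leq \eta$. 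Everything will follow from a simultaneous induction on $n$ establishing that $x_n \in U_{\xs,\eta}$, that $\Vert B_n - f'(\xs)\Vert \leq \eta$, and that $\Vert x_{n+1} - \xs\Vert \leq r \Vert x_n - \xs\Vert$.

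The heart is a bounded-deterioration step for $B_n$. From Eq.~\eqref{eqn:def_Bn} and \eqref{eqn:unsn_eq_1}, a direct calculation gives the identity $B_{n+1} - f'(\xs) = \bigl(B_n - f'(\xs)\bigr)\bigl(I - s_n u_n\tran\bigr) + \bigl(y_n - f'(\xs) s_n\bigr) u_n\tran$. The hypothesis $\Vert u_n\Vert = \Vert s_n\Vert^{-1}$ combined with Corollary~\ref{cor:rankone_norm} yields $\Vert s_n u_n\tran\Vert = 1$, so by ultrametricity $\Vert I - s_n u_n\tran\Vert \leq 1$. Using strict differentiability with $y = x_n$, $z = x_{n+1}$ (both in $U_{\xs,\eta}$ by the induction hypothesis) gives $\Vert y_n - f'(\xs) s_n\Vert \leq \eta \Vert s_n\Vert$, and Corollary~\ref{cor:rankone_norm} converts this into $\Vert (y_n - f'(\xs) s_n)u_n\tran\Vert \leq \eta$. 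Combining with the ultrametric inequality yields $\Vert B_{n+1} - f'(\xs)\Vert \leq \max(\Vert B_n - f'(\xs)\Vert, \eta) \leq \eta$.

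Invertibility of $B_n$ then comes for free: since $\Vert f'(\xs)^{-1}(B_n - f'(\xs))\Vert \leq \kappa\eta < 1$, the ultrametric Neumann series applied to $B_n = f'(\xs)\bigl(I + f'(\xs)^{-1}(B_n - f'(\xs))\bigr)$ shows $B_n$ invertible with $\Vert B_n^{-1}\Vert = \kappa$; in particular the Sherman--Morrison condition~\eqref{eqn:cond_sherman_morrison} needs no separate verification. For the linear-contraction step, rewrite using $f(\xs) = 0$: $x_{n+1} - \xs = B_n^{-1}\bigl[\bigl(B_n - f'(\xs)\bigr)(x_n - \xs) - \bigl(f(x_n) - f(\xs) - f'(\xs)(x_n - \xs)\bigr)\bigr]$. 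Strict differentiability with $y = \xs$, $z = x_n$ bounds the second bracket by $\eta \Vert x_n - \xs\Vert$, the first is bounded similarly via the deterioration estimate, and ultrametricity combined with $\Vert B_n^{-1}\Vert = \kappa$ gives $\Vert x_{n+1} - \xs\Vert \leq \kappa\eta \Vert x_n - \xs\Vert \leq r \Vert x_n - \xs\Vert$. In particular $x_{n+1}$ remains in $U_{\xs,\eta}$, closing the induction.

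The only genuine difficulty is the bookkeeping of this joint induction, and in particular ensuring the successive iterates keep landing inside the neighborhood on which strict differentiability is usable. Here ultrametric contraction makes things automatic, so the non-archimedean argument ends up noticeably cleaner than the archimedean prototype of~\cite{BDM}, with no need to invoke quadratic-form machinery nor a Frobenius-norm accumulation.
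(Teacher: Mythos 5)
Your proof is correct, and it follows the same bounded-deterioration induction as the paper's, but with two streamlining differences worth flagging. First, you invoke the single identity
\[
B_{n+1} - f'(\xs) = \bigl(B_n - f'(\xs)\bigr)\bigl(I - s_n u_n\tran\bigr) + \bigl(y_n - f'(\xs)\, s_n\bigr)\, u_n\tran,
\]
and bound both terms by $\eta$ using $\Vert I - s_n u_n\tran\Vert \le 1$; the paper instead splits off $B_{n+1}-B_n$ and then bounds $\Vert y_n - B_n s_n\Vert / \Vert s_n\Vert$ by a further $\max$. The two computations are equivalent, yours being the Broyden--Dennis--Mor\'e form. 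Second, and more substantively, you exploit the ultrametric Neumann series to conclude $\Vert B_n^{-1}\Vert = \Vert f'(\xs)^{-1}\Vert = \kappa$ exactly (because the perturbation $B_n^{-1}-f'(\xs)^{-1}$ has strictly smaller norm than $f'(\xs)^{-1}$), whereas the paper imports the archimedean Banach Perturbation Lemma from \cite{OR}, only getting $\Vert B_n^{-1}\Vert \le (1+r)\gamma$. This is why the paper juggles three coupled constants $\gamma, \delta, \lambda$ satisfying the somewhat opaque inequality $\delta\gamma(1+r)(3-r)\le r$, while your version runs on a single $\eta$ with $\kappa\eta\le r$; the sharper non-archimedean perturbation bound is what buys the simplification. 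One small nit: Definition~\ref{defn:diff} supplies a neighborhood $U_{\xs,\varepsilon}$, not a priori a ball; to close the induction you should, as the paper does, shrink to a ball $B(\xs,\eta)\subset U_{\xs,\eta}$ so that $\Vert x_{n+1}-\xs\Vert < \eta$ indeed implies $x_{n+1}$ lies where strict differentiability applies --- but that is cosmetic.
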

\begin{proof}
Let $r \in (0,1).$
Let the constants $\gamma, \delta, $ and $\lambda$ be satisfying:
\begin{equation}\label{eqn:trois}
\hspace{-12pt}
\gamma \geq \Vert f'(\xs)^{-1} \Vert,
\quad 
    0<\delta \leq  \frac{r}{\gamma (1+r)(3-r)},
    \quad
    0<\lambda \leq  \delta (1-r).
\end{equation}
Let $\eta>0$ be given by the strict differentiability at $\xs$
and such that on the ball $B(\xs,\eta),$
\[\Vert f(z) - f(y) - f'(\xs) \cdot (z{-}y)\Vert \leq  \lambda \cdot 
\Vert z{-}y\Vert. \]
We restrict further $\eta$ so as to have: $\eta \leq \delta (1-r).$
Let us assume that 
$$
    \Vert B_0 -f'(\xs) \Vert \leq \delta, 
\qquad \Vert x_0 - \xs \Vert < \eta.
$$

We have from the condition on $\delta$ that 
$\delta \, \gamma (1+r)(3-r) \leq r.$
Since $3-r>2,$ 
then $ 2 \delta\, \gamma (1+r) \leq r.$
Consequently,
\[
\frac{1}{1-2 \delta \gamma} \leq 1+r, 
\]
the denominator being non zero
because $\delta < (2 \gamma)^{-1}.$

Since $\Vert f'(\xs)^{-1} \Vert \leq \gamma$
and $\Vert B_0-f'(\xs) \Vert  < 2 \delta,$
the Banach Perturbation Lemma (\cite{OR} page 45) in the
Banach algebra $M_m(K)$ implies that $B_0$ is invertible and:
\[\Vert B_0^{-1} \Vert \leq \frac{\gamma}{1-2 \gamma \delta} \leq (1+r) \gamma. \] 

We can now estimate what happens to $x_1 = x_0 - B_0^{-1} f(x_0).$

\begin{align}
\Vert x_1-\xs \Vert &=\Vert x_0-\xs -B_0^{-1} f(x_0) \Vert, \label{eqn:first}\\
\nonumber &=\Vert-B_0^{-1} \left(f(x_0)-f(\xs)-f'(\xs)\cdot(x_0-\xs) \right) \\
\nonumber & \: \:
-B_0^{-1} \left( f'(\xs) (x_0 - \xs)-B_0(x_0-\xs) \right) \Vert ,\\
\nonumber &=\Vert-B_0^{-1} \left(f(x_0)-f(\xs)-f'(\xs)\cdot(x_0-\xs) \right)\\
\nonumber & \: \:
-B_0^{-1} \left( (f'(\xs)-B_0) (x_0 - \xs)\right) \Vert ,\\
\nonumber &\leq \Vert B_0^{-1} \Vert \left( \lambda \Vert x_0 - \xs \Vert +2\delta \Vert x_0 - \xs \Vert \right) , \\
\nonumber &\leq \Vert B_0^{-1} \Vert (\lambda + 2 \delta )  \Vert x_0 - \xs \Vert, \\
\nonumber& \leq \gamma (1+r) ( \delta (1-r)+2 \delta) \Vert x_0 - \xs \Vert, \\
\nonumber & \leq \gamma (1+r) \delta (3-r) \Vert x_0 - \xs \Vert \qquad  \text{by Eq.~\eqref{eqn:trois} (middle)} \\
& \leq r \Vert x_0 - \xs \Vert. \label{eqn:last}
\end{align}

Consequently, $\Vert x_1-\xs \Vert \leq  r \Vert x_0 - \xs \Vert$
and $\Vert x_1-\xs \Vert \leq r \eta <\eta,$ \textit{i.e.} $x_1 \in B(\xs, \eta).$

Eq.~\eqref{eqn:def_Bn} defines $B_1$ by $B_1=B_0-(y_1-B_0 s_1) \cdot u_1\tran$ for some
$u_1$  verifying $\Vert u_1 \Vert = \Vert s_1 \Vert^{-1}$ (see Eqs.~\eqref{eqn:unsn_eq_1}, Corollary~\ref{cor:rankone_norm}). Then: 
\vspace{-1mm}
\[\Vert B_1 -B_0 \Vert= \Vert f(x_{1})-f(x_0)-B_0 (x_1-x_0) \Vert \cdot \Vert x_1-x_0 \Vert^{-1}.\]
\vspace{-1mm}
Therefore,
\begin{align}
\Vert B_1 - f'(\xs) \Vert & \leq \max \left( \phantom{ {}^{-1}} \Vert B_0-f'(\xs) \Vert \right. ,\label{eqn:B1}\\
\nonumber & \: \: \: \: \: \left. \Vert f(x_{1})-f(x_0)-B_0 (x_1-x_0) \Vert \Vert x_1-x_0 \Vert^{-1}  \right), \\
\nonumber & \leq \max \left( \phantom{ {}^{-1}} \Vert B_0-f'(\xs) \Vert \right. ,\\
\nonumber & \: \: \: \: \: \Vert \left( B_0-f'(\xs) \right) (x_1-x_0) \Vert \Vert x_1-x_0 \Vert^{-1}, \\
\nonumber & \: \: \: \: \: \left. \Vert f(x_{1})-f(x_0)-f'(\xs) (x_1-x_0) \Vert \Vert x_1-x_0 \Vert^{-1}  \right), \\
\nonumber & \leq \max (\delta, \lambda ) \leq \delta.
\end{align}

We can then carry on and prove by induction
that for all $k$,
\begin{equation}\label{eqn:r^k}
\hspace{-6pt}\text{(i)}\ \ \Vert x_{k} - \xs \Vert \leq r^k \Vert x_0-\xs \Vert,
\quad
\text{and}
\quad \text{(ii)}
\ B_k \in B(f'(\xs), \delta).
\end{equation}

Heredity for Inequality~\eqref{eqn:r^k}-(i) comes from: a same use of the Banach Perturbation Lemma on $B_k$ so that $B_k$ is invertible;
that $\Vert B_k^{-1} \Vert \leq (1+r)\gamma$ and  by repeating the computations ~\eqref{eqn:first} to \eqref{eqn:last}:
\vspace{-.2cm}
\begin{align*}
\Vert x_{k+1}-\xs \Vert &\leq \Vert B_k \Vert^{-1} (\lambda+2 \delta) \Vert x_k-\xs \Vert, \\
& \leq (1+r)\gamma \delta (3-r) \Vert x_k-\xs \Vert, \\
& \leq r \Vert x_k - \xs \Vert.\\
\end{align*} 

\vspace{-.15cm}
\vspace{-.15cm}
\vspace{-.15cm}
\vspace{-.15cm}

We can deal with~\eqref{eqn:r^k}-(ii)  using  a similar computation as~\eqref{eqn:B1}:
\vspace{-1mm}
\begin{align}
\label{eqn:Bk+1} &\Vert B_{k+1} -f'(\xs) \Vert 
\leq \  \max \big( \Vert B_k-f'(\xs) \Vert , \\
\nonumber & \quad \qquad \Vert f(x_{k+1})-f(x_k)-B_k (x_{k+1}-x_k) \Vert \Vert x_{k+1}-x_k \Vert^{-1}  \big) \\
\nonumber & \qquad \leq \max \big( \Vert B_k-f'(\xs) \Vert , \\
\nonumber &  \quad \qquad \Vert f(x_{k+1})-f(x_k)-f'(\xs) (x_{k+1}-x_k) \Vert \Vert x_{k+1}-x_k \Vert^{-1}  \big), \\
 & \qquad \leq \max (\delta, \lambda )  \leq \delta.   \qedhere 
\end{align}
\end{proof}

\begin{cor}\label{cor:one_case}
Locally, one can 
take definition \eqref{eqn:def_simple_un} to
define all the $u_n$'s and all the $B_n$'s
will still be invertible. 
\end{cor}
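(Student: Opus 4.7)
The plan is to show, by induction on $n$, that under the hypotheses of Theorem~\ref{thm:conv_lineaire} with $r$ chosen sufficiently small, the Sherman--Morrison invertibility condition \eqref{eqn:cond_sherman_morrison} is always met when $u_n$ is defined by \eqref{eqn:def_simple_un}. Equivalently, letting $l$ denote the smallest index with $\val(s_{n,l})=\val(s_n)$, I must prove that $(B_n^{-1} y_n)_l \neq 0$ at each step. Using the quasi-Newton identity $s_n = -B_n^{-1} f_n$ together with $y_n = f_{n+1} - f_n$, one rewrites
\[ B_n^{-1} y_n = s_n + B_n^{-1} f_{n+1}, \]
so by the ultrametric inequality, the $l$-th coordinate of this vector is nonzero as soon as $\Vert B_n^{-1} f_{n+1} \Vert < \vert s_{n,l}\vert = \Vert s_n \Vert$, and this is the sole inequality to establish.

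I would bound the two sides using estimates already available from the proof of Theorem~\ref{thm:conv_lineaire}. The Banach perturbation step yields $\Vert B_n^{-1}\Vert \leq (1+r)\gamma$, and strict differentiability at $\xs$ combined with $f(\xs)=0$ gives, by the ultrametric splitting of Definition~\ref{defn:diff} with $\varepsilon = \lambda$, $z=x_{n+1}$, $y=\xs$, the bound $\Vert f_{n+1} \Vert \leq M \Vert x_{n+1} - \xs \Vert$ with $M := \max(\Vert f'(\xs)\Vert, \lambda)$. For the denominator, Theorem~\ref{thm:conv_lineaire} supplies the \emph{strict} inequality $\Vert x_{n+1} - \xs\Vert \leq r\Vert x_n - \xs \Vert < \Vert x_n - \xs \Vert$ (the degenerate case $x_n = \xs$ being handled by Remark~\ref{rem:zero_and_constancy}), so the ultrametric identity applied to $s_n = (x_{n+1}-\xs) - (x_n-\xs)$ forces $\Vert s_n \Vert = \Vert x_n - \xs \Vert$. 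Combining everything,
\[ \frac{\Vert B_n^{-1} f_{n+1} \Vert}{\Vert s_n \Vert} \leq (1+r) \gamma M \cdot \frac{\Vert x_{n+1}-\xs\Vert}{\Vert x_n-\xs\Vert} \leq (1+r) \gamma M r. \]

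To conclude, the constraints \eqref{eqn:trois} force $\lambda \leq \delta(1-r) \leq r(1-r)/(\gamma(1+r)(3-r))$, hence $M \to \Vert f'(\xs)\Vert$ and $(1+r)\gamma M r \to 0$ as $r \to 0$. Shrinking the initial neighborhood of $(\xs, f'(\xs))$ so as to take $r$ small enough --- which is exactly what the word \emph{locally} permits --- we obtain $(1+r)\gamma M r < 1$; the inequality then propagates to all $n$ along the very same induction built into the proof of Theorem~\ref{thm:conv_lineaire} (where $\Vert B_n^{-1}\Vert\leq(1+r)\gamma$ and $B_n\in B(f'(\xs),\delta)$ are preserved simultaneously), and every $B_n$ produced by \eqref{eqn:def_simple_un} is invertible. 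The only mildly subtle ingredient is the lower bound $\Vert s_n \Vert = \Vert x_n - \xs\Vert$, whose derivation genuinely uses the \emph{strict} Q-linear decrease together with the ultrametric; the rest is bookkeeping.
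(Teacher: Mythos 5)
Your proof is correct, but it takes a noticeably longer route than the paper's, and misses an observation that makes the corollary almost immediate.

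The paper's argument is a one-liner: the choice $u_n = s_{n,l}^{-1}e_l$ from Eq.~\eqref{eqn:def_simple_un} automatically has $\Vert u_n\Vert = \Vert s_n\Vert^{-1}$ and $u_n\tran s_n = 1$, so the hypotheses of Theorem~\ref{thm:conv_lineaire} apply verbatim, and its induction already yields $B_n \in B(f'(\xs),\delta)$ for all $n$. The Banach Perturbation Lemma then gives that each $B_n$ is invertible; and since the earlier lemma showed that invertibility of $B_n$ (defined by \eqref{eqn:def_Bn}) is \emph{equivalent} to the Sherman--Morrison condition \eqref{eqn:cond_sherman_morrison}, the latter is satisfied automatically --- there is nothing to verify by hand. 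You instead prove the Sherman--Morrison condition directly: rewriting $B_n^{-1}y_n = s_n + B_n^{-1}f_{n+1}$, bounding $\Vert B_n^{-1}f_{n+1}\Vert \le (1+r)\gamma M\Vert x_{n+1}-\xs\Vert$ via the Banach estimate and strict differentiability, and noting $\Vert s_n\Vert = \Vert x_n-\xs\Vert$ by ultrametricity and strict decrease, giving the ratio $(1+r)\gamma M r$. This is sound, but it costs you an extra smallness constraint on $r$ (the paper's argument holds for any $r\in(0,1)$ admitted in Theorem~\ref{thm:conv_lineaire}), and it duplicates work already encoded in the induction. The main ingredient you overlooked is the equivalence between invertibility of $B_n$ and \eqref{eqn:cond_sherman_morrison}, which converts the Banach Perturbation conclusion into exactly the condition you were laboring to establish.
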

\begin{proof}
With the assumptions of the proof of Theorem \ref{thm:conv_lineaire}, for $u_n$  defined by \eqref{eqn:def_simple_un}, $\Vert u_{n-1} \Vert = \Vert s_{n-1} \Vert^{-1}$ and \eqref{eqn:unsn_eq_1} are satisfied,
and by the Banach Perturbation Lemma, $B_n$ defined by  \eqref{eqn:def_Bn}
is invertible.
\end{proof}

\begin{rmk}
The fact that Broyden's method has locally
Q-linear\end{rmk}

\noindent convergence with ratio $r$
for any $r$ is not enough to
prove that ithas Q-superlinear convergence.
Indeed, as $x_k$ is going closer to
$\xs,$ there is no reason for
$B_k$ to get closer to $f'(\xs).$
Consequently, we cannot
expect from the previous result that
$x_k$ and $B_k$
enter loci of smaller ratio of convergence
as $k$ goes to infinity.
In fact, in general, $B_k$ 
does not converge to $f'(\xs).$

Finally, the next lemma, consequence
of the previous theorem, 
will be useful in the next subsection 
to obtain the R-superlinear convergence.
\begin{lem}
Using the same notations as in the proof of 
Theorem \ref{thm:conv_lineaire},
if $r \leq \left(\frac{\gamma \Vert f'(\xs ) \Vert}{2}\right)^{-1},$
and $\Vert B_0-f'(\xs) \Vert < \delta$ 
and $\Vert x_0 - \xs \Vert < \eta,$
then for all $n \in \N,$
\vspace{-2mm}
\[\Vert f_{n+1} \Vert \leq \Vert f_n \Vert. \] \label{lem:decr_norm_fn}
\vspace{-2mm}
\end{lem}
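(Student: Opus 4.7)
The plan is to use the quasi-Newton identity $B_n s_n = -f_n$ together with strict differentiability of $f$ at $\xs$ to get a uniform bound of the form $\Vert f_{n+1} \Vert \leq c \Vert f_n \Vert$ with $c < 1$.

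First I would invoke Theorem~\ref{thm:conv_lineaire}: with the same constants $\gamma, \delta, \lambda, \eta, r$, the hypotheses $\Vert B_0 - f'(\xs) \Vert < \delta$ and $\Vert x_0 - \xs \Vert < \eta$ imply that for every $n \in \N$ we have (i) $x_n \in B(\xs, \eta)$, so that strict differentiability at $\xs$ applies to any pair in $\{x_n, x_{n+1}\}$; (ii) $B_n \in B(f'(\xs), \delta)$; and (iii) by the Banach perturbation lemma (as in the proof of Theorem~\ref{thm:conv_lineaire}), $B_n$ is invertible with $\Vert B_n^{-1} \Vert \leq (1+r)\gamma$. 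In particular $s_n = -B_n^{-1} f_n$ is well defined and $f_n + B_n s_n = 0$.

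The key algebraic step is to insert this zero:
\begin{align*}
f_{n+1} &= (f_{n+1} - f_n - B_n s_n) + (f_n + B_n s_n)\\
        &= \bigl(f_{n+1} - f_n - f'(\xs)\, s_n\bigr) + \bigl(f'(\xs) - B_n\bigr) s_n.
\end{align*}
Strict differentiability at $\xs$ bounds the first summand by $\lambda \Vert s_n \Vert$, and Lemma~\ref{lem:Matrix_norms} together with (ii) bounds the second by $\delta \Vert s_n \Vert$. Since $\lambda \leq \delta(1-r) \leq \delta$, the ultrametric triangle inequality yields
\[
\Vert f_{n+1} \Vert \leq \delta \Vert s_n \Vert \leq \delta (1+r)\gamma \, \Vert f_n \Vert,
\]
where the last inequality uses $\Vert s_n \Vert \leq \Vert B_n^{-1} \Vert \Vert f_n \Vert \leq (1+r)\gamma \Vert f_n \Vert$. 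The choice of $\delta$ in Eq.~\eqref{eqn:trois} gives $\delta(1+r)\gamma \leq r/(3-r) < 1$ for $r \in (0,1)$, which proves $\Vert f_{n+1} \Vert \leq \Vert f_n \Vert$.

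The main technical subtlety is finding the right decomposition: once the identity $f_n + B_n s_n = 0$ is exploited, the two resulting error terms are precisely the ones controlled by the constants $\lambda$ and $\delta$ from Theorem~\ref{thm:conv_lineaire}. I note that the extra hypothesis $r \leq 2/(\gamma \Vert f'(\xs) \Vert)$ stated in the lemma is not needed for this argument; it would appear instead in an alternative proof that first establishes the equality $\Vert f_n \Vert = \Vert f'(\xs)(x_n - \xs) \Vert$ (valid by ultrametricity as soon as $\lambda < \gamma^{-1}$, which is ensured by the bound on $\delta$) and then estimates the ratio $\Vert f_{n+1} \Vert / \Vert f_n \Vert$ by $r \gamma \Vert f'(\xs) \Vert$, so it is likely an artifact of that alternative strategy.
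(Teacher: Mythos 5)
Your proof is correct, and it takes a genuinely different route from the paper. The paper's argument chains $\Vert f_{n+1}\Vert = \Vert B_{n+1} s_{n+1}\Vert \leq \Vert B_{n+1}\Vert\,\Vert s_{n+1}\Vert$, then uses the ultrametric ``isoceles triangle'' identity $\Vert s_n\Vert = \Vert x_n - \xs\Vert$ together with the $Q$-linear decrease to get $\Vert s_{n+1}\Vert \leq r\Vert s_n\Vert \leq r\Vert B_n^{-1}\Vert\,\Vert f_n\Vert$, yielding $\Vert f_{n+1}\Vert \leq r\,\Vert B_{n+1}\Vert\,\Vert B_n^{-1}\Vert\,\Vert f_n\Vert$; the extra hypothesis $r \leq 2/(\gamma\Vert f'(\xs)\Vert)$ is then what closes the loop after bounding $\Vert B_{n+1}\Vert\,\Vert B_n^{-1}\Vert$. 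Your decomposition
\[
f_{n+1} = \bigl(f_{n+1}-f_n-f'(\xs)\,s_n\bigr) + \bigl(f'(\xs)-B_n\bigr)s_n,
\]
made possible by inserting the quasi-Newton identity $f_n + B_n s_n = 0$, short-circuits this: the ultrametric max, the strict-differentiability bound $\lambda$, the $\delta$-bound on $\Vert B_n-f'(\xs)\Vert$, and the Banach-lemma bound on $\Vert B_n^{-1}\Vert$ directly give $\Vert f_{n+1}\Vert \leq \delta(1+r)\gamma\Vert f_n\Vert \leq \tfrac{r}{3-r}\Vert f_n\Vert$, which for $r\in(0,1)$ is strictly less than $\tfrac{1}{2}\Vert f_n\Vert$. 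This is both a stronger conclusion (a contraction factor bounded away from $1$) and, as you correctly observe, dispenses with the extra hypothesis on $r$ --- it is indeed only needed for the route through $\Vert B_{n+1}\Vert\,\Vert B_n^{-1}\Vert$, and in any case is re-imposed where the lemma is used (Step~1 of Theorem~\ref{thm:R_order}), so nothing downstream breaks. Note also that your decomposition is the natural analogue, centered at $x_n$ rather than at $\xs$, of the one already used in Eqs.~\eqref{eqn:first}--\eqref{eqn:last} for the $Q$-linear convergence; re-using it here is arguably cleaner than introducing the $s_n$-norm chain.
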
 
\vspace{-2mm}
\vspace{-2mm}
\begin{proof}
Let $n \in \N.$
We have $\Vert s_n\Vert \le r \Vert s_{n-1}\Vert$. Indeed,
from $\Vert x_{n+1}- x_n\Vert \le \max (\Vert x_{n+1} - x^\star\Vert, \ \Vert x^\star - x_n\Vert)$,
and $\Vert x_{n+1}-x_n \Vert < \Vert x_n -x^\star \Vert$, we see that
 $\Vert s_n\Vert = \Vert x^\star - x_n\Vert \le r \Vert x^\star - x_{n-1}\Vert = r \Vert s_{n-1}\Vert$.

Then using \eqref{eqn:QNewton} and 
the Q-linear convergence with ratio $r,$
we get that 
$\Vert f_{n+1} \Vert \leq r \Vert B_{n+1} \Vert \Vert B_n^{-1} \Vert \Vert f_n \Vert.$
Using~\eqref{eqn:Bk+1}, the definition of $\delta,$ $\gamma$ in~\eqref{eqn:trois},
and the fact that $0 <r <1,$
we get that $\Vert B_{n+1} \Vert \Vert B_n^{-1} \Vert \leq 2 \gamma \Vert f'(\xs ) \Vert ,$ which concludes the proof.
\end{proof}

\subsection{Local R-superlinear convergence}

We first remark that the $2n$-step convergence
in the linear case proved by Gay in \cite{Gay79}
is still valid. Indeed, it is only a matter of 
linear algebra.

\begin{theo}[Theorem 2.2 in \cite{Gay79}]
If $f$ is defined by $f(x)=Ax-b$ for some $A \in GL_m (K)$,
then any quasi-Newton method converges in at most $2m$ steps (\textit{i.e.} $f(x_{2m})=0$).
\end{theo}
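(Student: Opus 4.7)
The plan is to observe that this is a verbatim restatement of \cite[Theorem~2.2]{Gay79}, whose proof is purely linear-algebraic --- only $K$-linear combinations and the defining relations \eqref{eqn:def_Bn}, \eqref{eqn:unsn_eq_1} intervene, with no appeal to an inner product, a norm, or an ordering of the base field. Consequently Gay's argument applies in our non-archimedean setting without modification, regardless of which admissible choice of $u_n$ is made.

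Concretely, I would first specialise all identities to $f(x) = Ax - b$: linearity yields $y_n = A s_n$, so the secant condition \eqref{eqn:Bn_secante} becomes $(B_n - A)\, s_{n-1} = 0$. Setting $E_n := B_n - A$, the update \eqref{eqn:def_Bn} rewrites as $E_n = E_{n-1}(I - s_{n-1} u_{n-1}\tran)$, where each $I - s_k u_k\tran$ is, by \eqref{eqn:unsn_eq_1}, an idempotent of rank $m-1$ annihilating $s_k$. A short calculation gives the termination-friendly identity $f_{n+1} = A(x_n + s_n) - b = -B_n s_n + A s_n = -E_n s_n$, which reduces the goal to producing some $n \leq 2m-1$ with $E_n s_n = 0$. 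Iterating the update yields $E_n = E_0\, (I - s_0 u_0\tran)\cdots(I - s_{n-1} u_{n-1}\tran)$, and telescoping via $E_{j+1} s_k = E_j s_k + f_{j+1}(u_j\tran s_k)$ (using $E_{k+1} s_k = 0$ as base) gives $E_n s_k = \sum_{j=k+1}^{n-1} f_{j+1}(u_j\tran s_k)$ for $k \leq n-2$.

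The bound $2m$ then follows from a dimension count on $V_n := \mathrm{span}(s_0,\ldots,s_{n-1})$. This family grows by at most one dimension per step, so stabilises at some $V_{n_0}$ of dimension $d \leq m$ with $n_0 \leq m$. Beyond step $n_0$ every new $s_n$ lies in $V_{n_0}$; expanding $s_n$ in the $s_k$'s and plugging into the telescoping identity above forces $f_{n+1} = -E_n s_n$ into a strictly decreasing chain of subspaces of $\mathrm{span}(f_{n_0+1},\ldots,f_n)$, which must be exhausted within at most $m$ further iterations.

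The main obstacle is the dimension bookkeeping that yields exactly the sharp bound $2m$ rather than something weaker: because the Broyden update enforces only the \textit{most recent} secant equation and the projections $I - s_k u_k\tran$ do not commute, one cannot merely count rank drops of $E_n$ per step, but must pair the growth of $V_n$ against the decay of a companion subspace attached to the $f_n$'s. This combinatorial core is the substance of \cite{Gay79}, and, being purely field-theoretic, transfers unchanged to the ultrametric $K$.
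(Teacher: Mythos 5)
Your approach matches the paper's: the paper itself provides no proof beyond the single remark that Gay's argument ``is only a matter of linear algebra'' and therefore transfers to $K$, which is precisely your opening and closing observation, and your specialisation to $f(x)=Ax-b$ together with the identities $E_n=E_{n-1}(I-s_{n-1}u_{n-1}\tran)$, $f_{n+1}=-E_ns_n$ and $E_n s_k=\sum_{j=k+1}^{n-1} f_{j+1}\,(u_j\tran s_k)$ is a correct recasting of Gay's algebra over an arbitrary field. One caveat on the sketched dimension count: $V_n=\mathrm{span}(s_0,\ldots,s_{n-1})$ need not stabilise by step $m$, nor stabilise monotonically at all (its dimension can stagnate and later increase), so the assertion ``beyond step $n_0$ every new $s_n$ lies in $V_{n_0}$'' is not justified as written --- the correct observation is that at most $m$ of the first $2m$ steps can enlarge $V_n$ --- but since you explicitly defer this combinatorial core to Gay, this does not undermine the transfer argument.
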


With this and under a stronger differentiability assumption on $f$,
we can obtain R-superlinearity,
similarly to Theorem 3.1 of \cite{Gay79}. 
The proof also follows the main steps thereof.

\begin{theo}
Let us assume that on a neighborhood $U$ of $\xs,$
there is a $c_0 \in \R_{>0}$ such that
$f$ satisfies\footnote{This condition is satisfied by polynomials or converging power series.} 
\begin{equation}
    \forall x,y \in U, \Vert f(x)-f(y) - f'(\xs) \cdot (x-y) \Vert \leq c_0 \Vert x-y \Vert^2.  \label{DL2}
\end{equation}

Then there are $\eta, \: \delta$ and $\Gamma$ in $\R_{>0}$ such that if $x_0 \in B(\xs, \eta)$ 
and $B_0 \in B(f'(\xs), \delta),$ then for any $w \in \Z_{\geq 0},$
\[\Vert x_{w+2m}-\xs \Vert \leq \Gamma \Vert x_w - \xs \Vert^2. \] \label{thm:R_order}
\end{theo}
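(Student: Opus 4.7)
My plan is to run a \emph{shadow} Broyden iteration alongside the real one. Fix any $w\ge 0$, define $\tilde f(x):=f'(\xs)(x-\xs)$, and let $(\tilde x_n,\tilde B_n)_{n\ge w}$ be the Broyden sequence of our scheme applied to $\tilde f$, started from $\tilde x_w=x_w$ and $\tilde B_w=B_w$. Because $\tilde f$ is affine with invertible leading matrix $f'(\xs)$, the preceding theorem (Gay's linear case) gives $\tilde x_{w+2m}=\xs$ exactly. The whole task therefore reduces to bounding the discrepancy $\|x_{w+2m}-\tilde x_{w+2m}\|$ by $\Gamma\,\|x_w-\xs\|^2$.

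First I would fix the constants via Theorem~\ref{thm:conv_lineaire}: pick a ratio $r\in(0,1)$ small enough that Lemma~\ref{lem:decr_norm_fn} also applies, and then choose $\eta,\delta>0$ accordingly, shrinking $\eta$ further so that the quadratic bound \eqref{DL2} is valid on $B(\xs,\eta)$. Writing $\epsilon_n:=\|x_n-\xs\|$, one obtains $\epsilon_n\le r^{n-w}\epsilon_w$ and, by ultrametricity, $\|s_n\|=\epsilon_n$. For the shadow sequence an analogous Banach--perturbation argument, using that $\tilde y_n=f'(\xs)\tilde s_n$, keeps every $\tilde B_n$ invertible and in $B(f'(\xs),\delta)$.

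The core of the proof is a joint induction on $n\in\{w,\dots,w+2m\}$ showing
\[\|x_n-\tilde x_n\|=O(\epsilon_w^2),\qquad \|B_n-\tilde B_n\|=O(\epsilon_w).\]
For the matrix part I would use the identity
\[B_{n+1}-f'(\xs)=(B_n-f'(\xs))(I-s_n u_n\tran)+(y_n-f'(\xs)s_n)u_n\tran,\]
which follows from \eqref{eqn:def_Bn} and holds verbatim for the tildes. Since $\|u_n\|\|s_n\|=1$, the propagator $I-s_n u_n\tran$ has norm $\le 1$, while \eqref{DL2} gives $\|y_n-f'(\xs)s_n\|\le c_0\|s_n\|^2$, so the rank-one correction has norm $\le c_0\|s_n\|=O(\epsilon_n)$. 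Subtracting the two identities, and handling the mismatch between $u_n$ and $\tilde u_n$ via the observation that \eqref{eqn:def_simple_un} selects \emph{the same} leading index $l$ for both sequences (indeed $\|s_n-\tilde s_n\|=O(\epsilon_w^2)$ is strictly smaller than $\|s_n\|=\epsilon_n$, so the coordinate-wise valuations agree wherever the minimum is attained), controls $\|B_{n+1}-\tilde B_{n+1}\|$. The iterate difference is then propagated using $B_n^{-1}-\tilde B_n^{-1}=-B_n^{-1}(B_n-\tilde B_n)\tilde B_n^{-1}$ and the quadratic residual $R(x_n):=f(x_n)-f'(\xs)(x_n-\xs)$ of norm $\le c_0\epsilon_n^2$.

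The main obstacle is precisely this bookkeeping of $u_n\tran$ versus $\tilde u_n\tran$ along the $2m$ iterations: in Gay's Archimedean proof, Hilbert-space geometry delivers cancellations that are not available here. The compensating feature of our setting is that \eqref{eqn:def_simple_un} depends only on the valuation pattern of $s_n$, which coincides with that of $\tilde s_n$ as soon as $\epsilon_w$ is small enough --- so $u_n$ and $\tilde u_n$, though not identical (they rescale by $s_{n,l}^{-1}$ versus $\tilde s_{n,l}^{-1}$), are supported on the same axis $e_l$, and their difference only injects a term of order $\epsilon_w$ into the update. Closing the induction at $n=w+2m$ and invoking $\tilde x_{w+2m}=\xs$ then yields $\|x_{w+2m}-\xs\|\le\Gamma\,\epsilon_w^2$ for a constant $\Gamma$ depending only on $m$, $c_0$, $\gamma$ and $r$.
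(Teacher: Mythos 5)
Your high-level strategy is exactly the paper's: run a shadow Broyden iteration for the affine map $\hat f(x)=f'(\xs)(x-\xs)$ started from $(x_w,B_w)$, invoke Gay's $2m$-step exactness, and bound the drift between the two sequences by a joint induction. The problem is your choice of induction hypothesis for the matrix part, namely $\|B_n-\tilde B_n\|=O(\epsilon_w)$. In the update identity you write down, the difference $B_{n+1}-\tilde B_{n+1}$ contains, after rearranging, a term $(\tilde B_n-f'(\xs))(\tilde s_n\tilde u_n\tran - s_n u_n\tran)$. Even in the favorable case where $u_n$ and $\tilde u_n$ are supported on the same axis $e_l$, one gets $\|\tilde s_n\tilde u_n\tran - s_n u_n\tran\| \le \|s_n-\tilde s_n\|/\|\tilde s_n\|$, and there is no lower bound on $\|\tilde s_n\|$: the shadow iteration reaches $\xs$ exactly after $2m$ steps, so $\|\tilde s_n\|$ can be much smaller than $\epsilon_w$, or even zero, on the way. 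Your claimed $O(\epsilon_w)$ bound for the matrix drift therefore does not close. The paper instead tracks the product $\|B_{w+j}-\hat B_j\|\cdot\|f_{w+j}\|\le \gamma_{1,j}\,h^2$ (its inequality $(E_{1,j})$); the extra factor $\|f_{w+j}\|\approx\|s_{w+j}\|$ is precisely what cancels the troublesome $\|\tilde s_n\|^{-1}$ once the pieces are combined, and it is also all that is needed to propagate $(E_{2,j})$, since the latter only uses the combination $\|B_n^{-1}-\tilde B_n^{-1}\|\cdot\|f_n\|$.

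There is a second, related gap: the assertion that $u_n$ and $\tilde u_n$ always select the same leading index $l$. Your justification is $\|s_n-\tilde s_n\|=O(\epsilon_w^2)<\|s_n\|=\epsilon_n$, but $\epsilon_n$ has only the upper bound $r^{n-w}\epsilon_w$ and no lower bound, so $\epsilon_n<O(\epsilon_w^2)$ is entirely possible; moreover $\tilde s_n$ may vanish outright before step $2m$. The paper therefore splits the induction step into cases: $s_{w+k}=0$ (then $f_{w+k}=0$ and everything collapses by the convention of Remark 2.4), $\hat s_k=0$ (handled with a direct estimate on $\|f_{w+k+1}\|$), $l=\hat l$, and $l\neq\hat l$, the last of which requires a dedicated computation using $\|u_{w+k}-\hat u_k\|=\max(\|s_{w+k}\|,\|\hat s_k\|)^{-1}$. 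Without these case distinctions and without the $\|f_n\|$ weight in the matrix induction hypothesis, the argument does not go through.
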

\vspace{-0.7cm}
\begin{proof} 
{\em Step~1: Preliminaries.}
Condition \eqref{DL2} is stronger than strict differentiability as stated in
Theorem \ref{thm:conv_lineaire}. 
From its proof and Lemma \ref{lem:decr_norm_fn}, 
let $r \in (0,1)$ and
$\gamma \geq \Vert f'(\xs)^{-1} \Vert,$
as well as $\eta$ and $\delta$ such that:
$r\leq \left(\frac{\gamma \Vert f'(\xs ) \Vert}{2}\right)^{-1},$
and
if $x_0 \in B(\xs, \eta)$ 
and $B_0 \in B(f'(\xs), \delta),$
the sequences  $(x_n)_{n \in \N}$ and $(B_n)_{n \in \N}$ 
defined by  Broyden's method
(using \eqref{eqn:def_simple_un})
are well defined and moreover the four following inequalities
are satisfied:
for any $k \in \N,$
\begin{displaymath}
\begin{array}{ll}
   \Vert B_k -f'(\xs) \Vert \leq \delta,   & \Vert x_{k+1} - \xs \Vert \leq r \Vert x_k-\xs \Vert,  \\
  \Vert B_k^{-1} \Vert \leq (1+r) \gamma ,   & \Vert f(x_{k+1}) \Vert \leq \Vert f(x_k) \Vert.
\end{array}
\end{displaymath}

Let   $x_0 \in B(\xs, \eta),$ 
 $B_0 \in B(f'(\xs), \delta),$
 and $(x_n)_{n \in \N}$ and $(B_n)_{n \in \N}$ 
be defined by Broyden's method.
Let $w \in \N$ and $h = \Vert x_w - \xs \Vert.$
We must show that there is a $\Gamma,$
independent of $w$ such that $\Vert x_{w+2m} - \xs  \Vert \leq \Gamma h^2.$

{\em Step~2: reference to a linear map.}
Let the linear affine map $\hat{f}(x) =f'(\xs) \left(x-\xs \right),$
and $\hat{x_0}= x_w$ and $\hat{B}_0=B_w.$
Broyden's method
(using first \eqref{eqn:def_simple_un})
applied to those data
produces the  sequences
$(\hat{x}_n)_{n \in \N}$ and $(\hat{B}_n)_{n \in \N},$ 
which are constant for $n \geq 2m,$
as a result of Theorem \ref{thm:conv_lineaire}.
We define similarly  $\hat{s}_n=\hat{x}_{n+1}- \hat{x}_n$.
We have again for all $k \in \N$ the four inequalities:
\begin{displaymath}
\begin{array}{ll}
   \Vert \hat{B}_k -f'(\xs) \Vert \leq \delta,   & \Vert \hat{x}_{k+1} - \xs \Vert \leq r \Vert \hat{x}_k-\xs \Vert,  \\
  \Vert \hat{B}_k^{-1} \Vert \leq (1+r) \gamma   & \Vert \hat{f}(x_{k+1}) \Vert \leq \Vert \hat{f}(x_k) \Vert.
\end{array}
\end{displaymath}
The key to the proof is that $\hat{x}_{2m}=\xs$ and $\hat{x}_k$ and $x_{w+k}$ are not
too much far apart.

{\em Step~3: Statement of the induction.}
More concretely, we prove by induction on $j$ that there exist $\gamma_{1,j}$ and $\gamma_{2,j},$
independent of $w$, such that for $0 \leq j \leq 2m,$ we have the two
inequalities:
\begin{equation}
    \Vert B_{w+j} - \hat{B}_j \Vert \cdot \Vert f_{w+j} \Vert \leq \gamma_{1,j} h^2, \tag{$E_{1,j}$} \label{eqn:E_1j}
\end{equation}
\begin{equation}
    \Vert x_{w+j} - \hat{x}_j \Vert  \leq \gamma_{2,j} h^2. \tag{$E_{2,j}$} \label{eqn:E_2j}
\end{equation}

{\em Step~4: Base case.}
Since $B_w = \hat{B}_0$ and $x_w = \hat{x}_0,$ $(E_{1,0})$ and $(E_{2,0})$ are clear, 
with $\gamma_{1,0}=\gamma_{2,0}=0.$
Now, let us assume that $(E_{1,k})$ and $(E_{2,k})$ are true for a given $k$
such that $0\leq k <  2m.$
\smallskip

{\em Step~5:} We first prove $(E_{2,k+1}).$ One part of the inequality~\eqref{eqn:hard}
is obtained thanks to: $B_{w+k}^{-1} - \hat{B}_k^{-1} = B_{w+k}^{-1}( \hat{B}_k - B_{w+k}) \hat{B}_k^{-1}$.
\begin{align}
\nonumber    \Vert s_{w+k}-\hat{s}_k \Vert   & = 
    \Vert B_{w+k}^{-1} f_{w+k} - \hat{B}_k^{-1}\hat{f}(\hat{x}_k) \Vert \\
        \label{eqn:hard}                            & \hspace{-16pt}\leq \max \left( \Vert B_{w+k}^{-1} \Vert \cdot 
    \Vert \hat{B}_k^{-1} \Vert \cdot \Vert B_{w+k} - \hat{B}_k \Vert \cdot \Vert f_{w+k} \Vert ,\right. \\ 
         \nonumber                           & \left. \Vert \hat{B}_k^{-1} \Vert \cdot \Vert f_{w+k}-\hat{f}(\hat{x}_k) \Vert \right) \\
                  \nonumber                  &\hspace{-16pt}\leq \Vert \hat{B}_k^{-1} \Vert \max  
    \left( \Vert B_{w+k}^{-1} \Vert \cdot \Vert B_{w+k}-\hat{B}_k \Vert\cdot  \Vert f_{w+k} \Vert \right. , \\ 
\label{eqn:last3}                                &   \left. \Vert f_{w+k}-\hat{f}(x_{w+k}) \Vert \ , \  \Vert \hat{f}(x_{w+k})-\hat{f}(\hat{x}_k) \Vert \right)
\end{align}
The first term on the r.h.s. of~\eqref{eqn:last3} is upper-bounded by $(1+r)^2 \gamma^2 \gamma_{1,k} h^2$ using $(E_{1,k})$
and $ \Vert B_{w+k}^{-1}  \Vert \le (1+r)\gamma$.

For the second term of~\eqref{eqn:last3}, using~\eqref{DL2}:
\[\Vert f_{w+k}-f(\xs) -f'(\xs) \cdot (x_{w+k}-\xs) \Vert \leq  c_0 \Vert x_{w+k}-\xs \Vert^2\]
and $\Vert x_{w+k}-\xs \Vert \leq \Vert x_w-\xs \Vert = h,$
it is upper-bounded by $c_0 h^2.$
Finally, the last term is equal to $f'(\xs)(x_{w+k} - \hat{x}_k)$
whose norm is upper-bounded  by 
$ \Vert f'(\xs) \Vert \gamma_{2,k} h^2$
thanks to $(E_{2,k})$.
This is enough to define $\gamma_{3,k}$
such that $\Vert s_{w+k}-\hat{s}_k \Vert \leq \gamma_{3,k} h^2\quad \scriptstyle{(\ddagger)}$.
Consequently, with $\gamma_{2,k+1}= \max( \gamma_{3,k} ,  \gamma_{2,k}),$
we do have $ \Vert x_{w+k+1} - \hat{x}_{k+1} \Vert  \leq \gamma_{2,k+1} h^2,$
and $(E_{2,k+1})$ is satisfied.

{\em Step~6.0:} We now prove $(E_{1,k+1}).$
We first deal with some preliminary cases.
If $s_{w+k}=0,$ (that is $x_{w+k+1} = x_{w+k}$) 
then the property~\eqref{eqn:Bn_secante}
$s_{w+k} = - B_{w+k}^{-1} f_{w+k}$ implies that $f_{w+k}=0$,
and the property $B_{w+k+1} s_{w+k} = y_{w + k}$ implies that
$f_{w+k} = f_{w+k+1}=0$.
Thus $(E_{1,k+1})$ is satisfied with $\gamma_{1,k+1} = 0.$
If $\hat{s}_k=0,$ then similarly $\hat{f}(\hat{x}_{w+k})=\hat{f}(\hat{x}_{w+k+1})=0$.
Therefore, as we have seen before,
\begin{align*}
 \Vert f_{w+k+1} \Vert &= \Vert f_{w+k+1}-\hat{f}(x_{w+k+1})+\hat{f}(x_{w+k+1})-\hat{f}(\hat{x}_{k+1}) \Vert, \\
 &\leq \max \left(c_0 , \Vert f'(\xs) \Vert \gamma_{2,k+1}  \right)h^2.
\end{align*}
Then, using that $\Vert B_{w+k+1} - \hat{B}_{k+1} \Vert \leq \max ( \Vert B_{w+k+1}- f'(\xs)  \Vert ,
\Vert \hat{B}_{k+1} - f'(\xs) \Vert ) \le \delta,$ $(E_{1,k+1})$ is satisfied
with:\\
\centerline{$\gamma_{1,k+1} = \delta h^2 \max\left(c_0 , \Vert f'(\xs) \Vert \gamma_{2,k+1}  \right)$.}

{\em Step~6.1 :} We can now assume that both $s_k$ and $\hat{s}_k$ are non zero.
To prove that there is a $\gamma_{1,k+1}$ (independent of $w$)
such that $(E_{1,k+1})$ holds,
then in view of the fact that $\Vert f_{w+k+1} \Vert \leq \Vert f_{w+k}\Vert$ (Lemma~\ref{lem:decr_norm_fn})
of $(E_{1,k})$ and of the definition (Eq.~\eqref{eqn:def_Bn}) of $B_{k+1}$ and $\hat{B}_{k+1}$,
it is enough to prove that there is some  $\gamma_{4,k+1}$ (independent of $w$)
such that:
\begin{multline}\label{eqn:gamma4}
  \Vert \left( y_{w+k}-B_{w+k}s_{w+k} \right) u_{w+k} \tran - \\
  \left( \hat{y}_{k} -
 \hat{B}_{k}\hat{s}_{k} \right) \hat{u}_{k}\tran \Vert \cdot \Vert f_{w+k+1} \Vert  \leq  \gamma_{4,k+1} h^2.
\end{multline}
Using that $\Vert f_{w+k+1} \Vert \leq \Vert f_{w+k} \Vert$ (by Lemma~\ref{lem:decr_norm_fn}),
we obtain:
\begin{align}
\nonumber   \Vert f_{w+k+1}\Vert & 
 \cdot     \Vert \left( y_{w+k}-B_{w+k}s_{w+k} \right) u_{w+k} \tran -
   \left( \hat{y}_{k}-\hat{B}_{k}\hat{s}_{k} \right) \hat{u}_{k}\tran \Vert  \\
 \nonumber    \leq & \Vert f_{w+k} \Vert \max \left( \Vert y_{w+k} - f'(\xs) s_{w+k} \Vert \cdot \Vert u_{w+k} \tran \Vert \right. ,  \\
 \nonumber    & \left.  \Vert (f'(\xs)-B_{w+k}) s_{w+k} u_{w+k} \tran - (f'(\xs)-\hat{B}_k)\hat{s}_k \hat{u}_k\tran \Vert \right)\\
 \label{eqn:i}  \qquad  \leq & \Vert f_{w+k} \Vert \max \left( \Vert y_{w+k} - f'(\xs) s_{w+k} \Vert \cdot \Vert u_{w+k} 
     \tran \Vert \right. , \\
  \label{eqn:ii}  \qquad    &  \Vert (f'(\xs)-\hat{B}_k) 
    (s_{w+k} u_{w+k} \tran - \hat{s}_k \hat{u}_k\tran ) \Vert ,  \\ 
  \label{eqn:iii} \qquad    &  \left. \Vert (B_{w+k}-\hat{B}_k)
    s_{w+k} u_{w+k} \tran \Vert \right) . 
\end{align}

{\em Step~6.2:} From $f_{w+k} = - B_{w+k}s_{w+k},$ we have $\Vert f_{w+k} \Vert \le \Vert s_{w+k} \Vert \cdot 
\max (  \Vert  B_{w+k} -f'(\xs) \Vert ,  
\Vert f'(\xs) \Vert) \le \Vert s_{w+k} \Vert \cdot \max(\delta, \Vert f'(\xs)\Vert)
\ \scriptstyle{(\bullet)}$.
\mbox{Otoh} by \eqref{DL2}, $\Vert y_{w+k} - f'(\xs) s_{w+k} \Vert \le  c_0 \Vert s_{w+k}\Vert^2 $.
It follows that the first term \eqref{eqn:i} can be upper-bounded in the following way:
$$
\hspace{-8pt}\eqref{eqn:i} \le c_0 \Vert s_{w+k}\Vert^3 \Vert u_{w+k}\tran \Vert \max( \delta, \Vert f'(\xs) \Vert )
\le c_0 h^2  
\max( \delta, \Vert f'(\xs) \Vert ),
$$
the rightmost inequality  being obtained from $\Vert u_{w+k}\tran \Vert = \Vert s_{w+k}\Vert^{-1}$
and $\Vert s_{w+k} \Vert \le \max (\Vert x_{w+k+1} -\xs \Vert, \Vert x_{w+k} -\xs \Vert) =
\Vert x_{w+k} -\xs \Vert \le \Vert x_{w} -\xs \Vert =h$.
 

 {\em Step~6.3:} The third one \eqref{eqn:iii} can be upper-bounded using $(E_{1,k})$:
\[\eqref{eqn:iii}  \le \Vert f_{w+k} \Vert \Vert (B_{w+k}-\hat{B}_k)s_{w+k} u_{w+k} \tran \Vert \leq \gamma_{1,k} h^2. \]

 {\em Step~6.4:} For the second one \eqref{eqn:ii}, observe that:
\begin{equation}\label{eqn:(ii)}
s_{w+k} u_{w+k}\tran - \hat{s}_k \hat{u}_k\tran= (s_{w+k}-\hat{s}_k)u_{w+k}\tran - \hat{s}_k (u_{w+k}\tran -\hat{u}_k \tran).
\end{equation}
The first term is easy to manage using the previous inequality $\scriptstyle{(\bullet)}$ on $\Vert f_{w+k}\Vert$,
the inequality $\scriptstyle{(\ddagger)}$ on $\Vert s_{w+k} - \hat{s}_k \Vert$ and $\Vert s_{w+k} \Vert \Vert u_{w+k}\tran \Vert =1$:
\begin{equation}\label{eqn:(ii)1}
\Vert f_{w+k} \Vert 
\cdot \Vert  (s_{w+k}-\hat{s}_k)u_{w+k}\tran \Vert \leq \max ( \delta  , \Vert f'(\xs) \Vert  )  \gamma_{3,k} h^2.
\end{equation}
The second one of Eq.~\eqref{eqn:(ii)} is a little bit trickier. Define as in \eqref{eqn:def_simple_un},
$u_{w+k}=s_{w+k,l}^{-1}e_l$ and $\hat{u}_{k}=\hat{s}_{k,\hat{l}}^{-1}e_{\hat{l}}$
for some given $l$ and $\hat{l}.$ 

If $l = \hat{l},$ we have: (the last inequality below follows from $\scriptstyle{(\ddagger)}$).
\begin{align*}
   \Vert u_{w+k}-\hat{u}_k \Vert &= \vert s_{w+k,l}^{-1}-\hat{s}_{k,l}^{-1} \vert 
    = \frac{\vert s_{w+k,l}-\hat{s}_{k,l} \vert}{\vert s_{w+k,l}\vert \cdot \vert \hat{s}_{k,l} \vert } 
     = \frac{\vert s_{w+k,l}-\hat{s}_{k,l} \vert}{\Vert s_{w+k}\Vert \cdot \Vert \hat{s}_{k} \Vert } \\
     & \leq \frac{\Vert s_{w+k}-\hat{s}_{k} \Vert}{\Vert s_{w+k}\Vert \cdot \Vert \hat{s}_{k} \Vert } 
     \leq \frac{\gamma_{3,k}h^2}{\Vert s_{w+k}\Vert \cdot \Vert \hat{s}_{k} \Vert }.
\end{align*}
From this and from  $\Vert f_{w+k}\Vert = \Vert B_{w+k} \Vert \cdot \Vert s_{w+k}\Vert$ we get:
\begin{equation}\label{eqn:(ii)2}
\Vert f_{w+k} \Vert  \cdot \Vert u_{w+k}-\hat{u}_k \Vert \cdot \Vert \hat{s}_{k} \Vert \leq \gamma_{3,k}
\max \left(\delta , \Vert f'(\xs) \Vert \right) h^2.
\end{equation}
If $l \neq \hat{l},$ then either $\Vert s_{w+k}-\hat{s}_k \Vert = \Vert s_{w+k} \Vert,$
if $\Vert \hat{s}_k \Vert \leq \Vert s_{w+k} \Vert,$
or $\Vert s_{w+k}-\hat{s}_k \Vert = \Vert \hat{s}_k  \Vert,$
if $\Vert s_{w+k} \Vert \leq \Vert \hat{s}_k  \Vert.$
In the first case, we have
\[\Vert u_{w+k}-\hat{u}_k \Vert = \Vert \hat{s}_k \Vert^{-1}, \]
and then, the second term of~\eqref{eqn:(ii)} multiplied by $\Vert f_{w+k}\Vert$ verifies: 
\begin{align}
 \nonumber   \Vert f_{w+k} \Vert \cdot  \Vert u_{w+k}-\hat{u}_k \Vert \cdot \Vert \hat{s}_{k} \Vert &\leq  
\max \left(\delta ,  \Vert f'(\xs) \Vert \right) \Vert s_{w+k} \Vert \\
    &\leq \max \left(\delta , \Vert f'(\xs) \Vert \right) \gamma_{3,k} h^2. \label{eqn:(ii)3}
\end{align}
The second case follows with the same computation.
Eqs~\eqref{eqn:(ii)3}~\eqref{eqn:(ii)2}~\eqref{eqn:(ii)1} prove together the
bound on the expression~\eqref{eqn:ii} in ~\eqref{eqn:(ii)}. In turn with the bounds on 
the terms \eqref{eqn:i} and \eqref{eqn:iii},
prove~\eqref{eqn:gamma4}.
This concludes the proof of 
$(E_{1,k+1}),$ and finally the induction.

{\em Step~7:} Consequently, $\Vert x_{w+2m} - \hat{x}_{2m} \Vert \leq \gamma_{2,2m} h^2.$
Thanks to Theorem \ref{thm:conv_lineaire}, $\hat{x}_{2m} = \xs,$
and thus, we have proved that for any $w,$
\[\Vert x_{w+2m} - \xs \Vert \leq \gamma_{2,2m} \Vert x_w - \xs \Vert^2.  \qedhere \]
\vspace{-.3cm}
\end{proof}
Theorem~\ref{thm:R_order} has for immediate consequence:
\begin{theo}\label{th:bigone}
Broyden's method has locally R-order of convergence $2^{\frac{1}{2m}}.$
\end{theo}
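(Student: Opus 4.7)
My plan is to derive the R-order claim directly from the quadratic-recurrence bound of Theorem~\ref{thm:R_order} by iterating it along the subsequence of indices that are multiples of $2m$, and then filling in the intermediate indices using the Q-linear bound already established in Theorem~\ref{thm:conv_lineaire}.

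More precisely, fix $\eta, \delta, \Gamma$ as in Theorem~\ref{thm:R_order}, and shrink $\eta$ further if necessary so that $\Gamma \eta < 1$ (so that $\Gamma \Vert x_0 - \xs\Vert < 1$ whenever $\Vert x_0 - \xs\Vert < \eta$). Set $h_k = \Vert x_{2mk} - \xs\Vert$. Applying Theorem~\ref{thm:R_order} with $w = 2mk$ gives $h_{k+1} \leq \Gamma\, h_k^2$, equivalently $\Gamma h_{k+1} \leq (\Gamma h_k)^2$. Iterating this yields
\[
\Gamma\, h_k \;\leq\; (\Gamma\, h_0)^{2^k},
\qquad\text{hence}\qquad h_k \;\leq\; \Gamma^{-1} (\Gamma\, h_0)^{2^k},
\]
and since $\Gamma h_0 < 1$, the subsequence $(x_{2mk})_k$ enjoys the desired doubly-exponential decay.

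Now, for an arbitrary $n \in \N$ write $n = 2mk + j$ with $0 \leq j < 2m$. By the Q-linear convergence with ratio $r < 1$ of Theorem~\ref{thm:conv_lineaire}, we have $\Vert x_n - \xs\Vert \leq \Vert x_{2mk} - \xs\Vert \leq \Gamma^{-1}(\Gamma h_0)^{2^k}$. Setting $\mu = 2^{1/(2m)}$, so that $\mu^n = 2^{n/(2m)} = 2^k \cdot 2^{j/(2m)}$, we compute
\[
\Vert x_n - \xs \Vert^{1/\mu^n} \;\leq\; \bigl(\Gamma^{-1}\bigr)^{1/\mu^n} \cdot (\Gamma h_0)^{2^k/\mu^n} \;=\; \bigl(\Gamma^{-1}\bigr)^{1/\mu^n} \cdot (\Gamma h_0)^{2^{-j/(2m)}}.
\]
As $n \to \infty$ we have $\mu^n \to \infty$, so the first factor tends to $1$, while the second factor, with $0 \leq j < 2m$, is bounded above by $(\Gamma h_0)^{1/2} < 1$. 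Therefore $\limsup_n \Vert x_n - \xs\Vert^{1/\mu^n} \leq (\Gamma h_0)^{1/2} < 1$, which by Definition~\ref{defn:Q_R_CV} is exactly the assertion that the R-order of convergence is at least $2^{1/(2m)}$.

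There is no real obstacle here: the argument is a clean bookkeeping exercise once Theorem~\ref{thm:R_order} is available. The only subtle point is making sure that the intermediate indices $n$ with $n \not\equiv 0 \pmod{2m}$ do not spoil the $\limsup$; this is precisely where one uses the monotonicity $\Vert x_n - \xs \Vert \leq \Vert x_{2mk} - \xs \Vert$ coming from the Q-linear bound with ratio $r < 1$, together with the explicit dependence of $\mu^n$ on the residue $j$ modulo $2m$.
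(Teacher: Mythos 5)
Your proof is correct and takes essentially the same route as the paper's: both decompose $n$ as a multiple of $2m$ plus a residue in $\{0,\dots,2m-1\}$, iterate the quadratic bound of Theorem~\ref{thm:R_order} along the arithmetic progression, and use the Q-linear monotonicity from Theorem~\ref{thm:conv_lineaire} to control the residue offset (the paper runs the quadratic recursion from $x_k$ and then bounds $\Vert x_k-\xs\Vert \le \Vert x_0-\xs\Vert$, whereas you run it from $x_0$ and then bound $\Vert x_{2mk+j}-\xs\Vert \le \Vert x_{2mk}-\xs\Vert$; the two bookkeepings are equivalent). One small aside: as in the paper, the normalization $\Gamma h_0 < 1$ is exactly the ``$\Vert x_0 - \xs\Vert$ small enough'' condition, so nothing extra is being assumed.
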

\begin{proof}
Let us take $x_0$ and $B_0$ as in the
proof of the previous theorem, and same constants and notations.
For any $w,$
$\Vert x_{w+2m} - \xs \Vert \leq \Gamma \Vert x_w - \xs \Vert^2. $

Consequently, for $0 \leq k <2m,$  $l \in \N,$ and $\mu = 2^{1/2m},$
\begin{align*}
    \Vert x_{2lm+k} - \xs \Vert^{\mu^{-2lm-k}} & \leq \Vert x_k -\xs \Vert^{2^l \mu^{-2lm-k}} \Gamma^{(2^l-1)\mu^{-2lm-k}}  \\
    &\leq \Vert x_k -\xs \Vert^{2^l 2^{-l-\frac{k}{2m}}} \Gamma^{(2^l-1)2^{-l-\frac{k}{2m}}} \\
    &\leq \Vert x_k -\xs \Vert^{2^{-\frac{k}{2m}}}\Gamma^{(1-2^{-l})2^{-\frac{k}{2m}}}.
\end{align*} 
For simplicity, we can assume that $\Gamma \geq 1.$
Thus, 
\begin{align*}
    \Vert x_{2lm+k} - \xs \Vert^{\mu^{-2lm-k}} & \leq  \Vert x_k -\xs \Vert^{2^{-\frac{k}{2m}}}\Gamma^{2^{-\frac{k}{2m}}}. \\
    &\leq \Vert x_0 -\xs \Vert^{2^{-\frac{k}{2m}}}\Gamma^{2^{-\frac{k}{2m}}}.
\end{align*} 

Therefore, for $\Vert x_0 -\xs \Vert$ small enough, we get that
for all $k$ such that $0 \leq k <2m$, $\Vert x_0 -\xs \Vert^{2^{-\frac{k}{2m}}}\Gamma^{2^{-\frac{k}{2m}}}<1,$
and hence, $\limsup_s \Vert x_{s} - \xs \Vert^{\mu^{s}} <1. $
From 9.2.7 of \cite{OR}, we then obtain that Broyden's method do have
locally R-order of convergence $2^{\frac{1}{2m}}.$
\end{proof}

\section{Questions on Q-superlinearity}
\label{sec:Qsuperlinearity}
A Q-order of $\mu$ implies an R-order of $\mu.$
The converse is not true.
Over $\R,$ one of the most important
result concerning Broyden's method
is that it is Q-superlinear.
The extension of this result to
the non-archimedean case
remains an open question.

\subsection{Dimension $1$: secant method}

In dimension one, Broyden's method reduces
to the secant method.

It is known since \cite{Bach09}
that the $p$-adic secant method 
applied on polynomials
has order $\Phi,$ the golden ratio.
Its generalization to a general non-archimedean
context is straightforward.

\begin{prop}
Let us assume that $m=1$ and on a neighborhood $U$ of $\xs,$
there is a $c_0 \in \R_{>0}$ such that
$f$ satisfies \eqref{DL2} on $U.$
Then the secant method has locally Q-order of convergence $\Phi.$
\end{prop}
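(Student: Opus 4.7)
The plan is to follow the classical proof for the secant method over $\R$, with the simplifications afforded by the ultrametric setting.

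First, in dimension one the update matrix $B_n$ from \eqref{eqn:Bn_secante} is simply the scalar divided difference $B_n = (f(x_n) - f(x_{n-1}))/(x_n - x_{n-1})$, so the iteration becomes the usual secant iteration $x_{n+1} = x_n - f(x_n)/B_n$. Set $e_k = x_k - \xs$. Thanks to Theorem~\ref{thm:conv_lineaire} (with ratio $r < 1$ chosen as small as necessary), we may assume that $\Vert e_n \Vert < \Vert e_{n-1}\Vert$ for all $n$, hence $\Vert x_n - x_{n-1}\Vert = \Vert e_{n-1}\Vert$ by ultrametricity, and that $\Vert B_n^{-1}\Vert$ stays bounded by some constant, say $\gamma' = (1+r)\gamma$.

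Next, I would bound $B_n - f'(\xs)$ using~\eqref{DL2} with $x = x_n$ and $y = x_{n-1}$: this gives $\Vert (B_n - f'(\xs))(x_n - x_{n-1})\Vert \leq c_0 \Vert x_n - x_{n-1}\Vert^2$, and dividing by the nonzero scalar $x_n - x_{n-1}$ yields $\Vert B_n - f'(\xs)\Vert \leq c_0 \Vert e_{n-1}\Vert$.

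Then I would write the standard error decomposition
\[
e_{n+1} = -B_n^{-1} \bigl[ f(x_n) - f(\xs) - f'(\xs)\, e_n \bigr] - B_n^{-1}\bigl(f'(\xs) - B_n\bigr) e_n .
\]
The first bracket is at most $c_0 \Vert e_n \Vert^2$ by~\eqref{DL2} applied with $x = x_n$, $y = \xs$; the second is at most $c_0 \Vert e_{n-1}\Vert \Vert e_n\Vert$ by the previous step. Ultrametricity and $\Vert e_n\Vert \leq \Vert e_{n-1}\Vert$ then yield the key inequality
\[
\Vert e_{n+1}\Vert \leq C \Vert e_n \Vert \cdot \Vert e_{n-1}\Vert, \qquad C := c_0 \gamma' .
\]

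Finally, I would deduce the Q-order $\Phi$ from this bilinear recurrence by the standard Fibonacci-style trick. Set $a_n = \log(C \Vert e_n\Vert)$ (base irrelevant). The recurrence gives $a_{n+1} \leq a_n + a_{n-1}$, whose dominant growth is $\Phi^n$ where $\Phi$ is the positive root of $X^2 = X+1$. Equivalently, to check directly that $\Vert e_{n+1}\Vert \leq K \Vert e_n\Vert^\Phi$ for a suitable $K$ and all large $n$, one verifies by induction: substituting the recurrence and using $\Vert e_n\Vert \leq K \Vert e_{n-1}\Vert^\Phi$ and $\Phi^2 = \Phi + 1$,
\[
\Vert e_{n+1}\Vert \leq C \Vert e_n\Vert \cdot \Vert e_{n-1}\Vert \leq C K^{-1/\Phi} \Vert e_n\Vert \cdot \Vert e_n\Vert^{1/\Phi} = C K^{-1/\Phi} \Vert e_n\Vert^{\Phi},
\]
so the induction closes as soon as $C K^{-1/\Phi} \leq K$, i.e.\ $K \geq C^{\Phi/(\Phi+1)}$, which can be achieved by taking the starting point close enough to $\xs$ so that the initial data $\Vert e_0\Vert$, $\Vert e_1\Vert$ satisfy the base case. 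The only mildly delicate point of the whole argument is this last calibration of constants, but it is the same bookkeeping as in the real case; the rest is strictly easier thanks to the ultrametric inequality replacing triangle inequalities and squarings.
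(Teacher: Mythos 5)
Your route to the key recurrence $\Vert e_{n+1}\Vert \leq C\,\Vert e_n\Vert\,\Vert e_{n-1}\Vert$ is sound and genuinely different from the paper's: you estimate $\Vert B_n - f'(\xs)\Vert \leq c_0\Vert e_{n-1}\Vert$ and feed it into the generic quasi-Newton error decomposition, whereas the paper manipulates the one-dimensional secant formula $\varepsilon_{n+1}=\bigl(\varepsilon_{n-1}f(x_n)-\varepsilon_n f(x_{n-1})\bigr)/\bigl(f(x_n)-f(x_{n-1})\bigr)$ directly and uses the ultrametric equalities $\vert f(x_k)\vert = \vert f'(\xs)\vert\,\vert\varepsilon_k\vert$, $\vert x_{k+1}-x_k\vert = \vert\varepsilon_k\vert$ to normalise everything against $\vert f'(\xs)\vert$. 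Both roads lead to the same bilinear bound with essentially the same constant.

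The final step, however, has the inequality pointing the wrong way. From the induction hypothesis $\Vert e_n\Vert \leq K\,\Vert e_{n-1}\Vert^\Phi$ one may only deduce $\Vert e_{n-1}\Vert \geq K^{-1/\Phi}\,\Vert e_n\Vert^{1/\Phi}$ (raise both sides to the positive power $1/\Phi$), which is the opposite of the bound $\Vert e_{n-1}\Vert \leq K^{-1/\Phi}\,\Vert e_n\Vert^{1/\Phi}$ that you substitute into $C\,\Vert e_n\Vert\,\Vert e_{n-1}\Vert$. Hence the displayed chain leading to $\Vert e_{n+1}\Vert \leq CK^{-1/\Phi}\Vert e_n\Vert^{\Phi}$ does not follow and the induction does not close. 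The underlying difficulty is real: the one-sided recurrence $\Vert e_{n+1}\Vert \leq C\,\Vert e_n\Vert\,\Vert e_{n-1}\Vert$ controls how slowly $e_n$ may decay, but nothing prevents an occasional step where $\Vert e_n\Vert$ collapses far below $\Vert e_{n-1}\Vert^\Phi$, after which the ratio $\Vert e_{n+1}\Vert/\Vert e_n\Vert^\Phi$ can blow up; this is precisely the gap between R-order and Q-order. The paper addresses this by working with the normalised ratios $Z_n = v_{n+1}/v_n^\Phi$ (where $v_n = C\vert\varepsilon_n\vert$), deriving $Z_n \leq Z_{n-1}^{1-\Phi}$ and then exploiting $\vert 1-\Phi\vert<1$; you need either that device or an independent matching lower bound on $\Vert e_{n-1}\Vert$ before your induction can be salvaged.
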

\begin{proof}
Let us assume that we are in the same
context as in the proof of Theorem \ref{thm:R_order},
with some Q-linear convergence of ratio $r <1.$
Let us define $\varepsilon_k= x_k - \xs  $ for $k \in \N.$
For all $k \in \N,$ $\vert \varepsilon_{k+1} \vert  < \vert \varepsilon_k \vert.$
Then by ultrametricity, $ \vert x_{k+1}-x_k \vert =\vert \varepsilon_k \vert$.
Also, we further assume that $c_0 \vert \varepsilon_0 \vert< \vert f'(\xs) \vert$
so that for all $k \in \N,$
 $\vert f'(\xs) \times (x_{k+1}-x_k) \vert >  c_0 \vert (x_{k+1}-x_k) \vert^2,  $
 which also implies by ultrametricity and \eqref{DL2} that for all $k \in \N,$
\[\vert f(x_{k+1})-f(x_k) \vert = \vert f'(\xs) \times (x_{k+1}-x_k) \vert.\]
Similarly,
$ \vert f(x_k) \vert = \vert f'(\xs ) \vert \vert \varepsilon_k \vert.$

Now, let $n \in \Z_{>0}.$
Broyden's iteration is given by:
\[x_{n+1}=x_n- \frac{x_n-x_{n-1}}{f(x_n) -f(x_{n-1})}. \]
It rewrites as:
\begin{align*}
    \vert \varepsilon_{n+1} \vert &= \vert \varepsilon_n - \frac{\varepsilon_n f(x_n)- \varepsilon_{n-1} f(x_n)}{f(x_n)-f(x_{n-1})} \vert 
    = \vert \frac{\varepsilon_{n-1} f(x_n)- \varepsilon_{n} f(x_{n-1})}{f(x_n)-f(x_{n-1})} \vert \\
    &\leq c_0\frac{\max \left( \vert \varepsilon_{n-1} \vert \vert \varepsilon_n \vert^2 \ , \  \vert \varepsilon_{n-1} \vert^2 \vert \varepsilon_n \vert \right)}{\vert f(x_n)-f(x_{n-1}) \vert} 
     \leq \frac{c_0}{\vert f'(\xs ) \vert} \vert \varepsilon_n \vert \vert \varepsilon_{n-1} \vert.
\end{align*}
Let us write $C=\frac{c_0}{\vert f'(\xs ) \vert} $
and $v_n= C \varepsilon_n.$
Then, $v_{n+1} \leq v_n v_{n-1}$ for any $n>0$
and consequently,
\[\frac{v_{n+1}}{v_n^\Phi} \leq v_n^{1-\Phi} v_{n-1} \leq \left( \frac{v_n}{v_{n-1}^\Phi} \right)^{1-\Phi}, \]
as $\Phi^2=\Phi+1.$
If we define $(Y_n)_{n \in \Z_{\geq 1}}$ by $Y_1=\frac{v_1}{v_0^\Phi}$
and $Y_{n+1}=Y_n^{1-\Phi},$ then $\frac{v_{n+1}}{v_n^\Phi} \leq Y_n.$
Since $\vert 1-\Phi \vert <1,$ then $Y_n$ converges to $1.$
Therefore, it is bounded by some $D \in \R_+,$
and $\frac{v_{n+1}}{v_n^\Phi} \leq D$ for all $n \in \Z_{\geq 1}.$
This concludes the proof.
\end{proof}

\subsection{General case}

Over $\R,$ Broyden's method is known to converge
Q-superlinearly.
The key point is that for any 
$E \in M_m(\R)$ and $s \in \R^m \setminus \lbrace 0 \rbrace,$

\begin{equation}
\Vert E \left( I- \frac{s\cdot s\tran}{(s\tran \cdot s)} \right) \Vert_F^2= \Vert E \Vert_F^2-\left( \frac{\Vert Es \Vert_2}{\Vert s \Vert_2} \right)^2, \label{eqn:eq_Frob_sur_R}
\end{equation}
equation $(5.5)$ of \cite{DM77}.
The minus sign is a blessing as it allows
the appearance of  a telescopic sum which
plays a key role in proving 
that $\frac{\Vert x_{n+1}- \xs \Vert}{\Vert x_n - \xs \Vert}$
converges to zero.
Unfortunately, there does not seem to be
a non-archimedean analogue to this
equality.
Thanks to Theorem \ref{thm:R_order},
we nevertheless believe in the following conjecture.

\begin{conj}
In the same setting as Theorem \ref{thm:R_order},
Broyden's method
has locally Q-superlinear convergence. \label{conj1}
\end{conj}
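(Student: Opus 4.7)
The plan is to seek a non-archimedean analogue of the Dennis--Moré strategy. Combining $B_k s_k = -f_k$ with \eqref{DL2} applied to $f_k = f(x_k) - f(\xs)$ yields the identity
\begin{equation*}
(B_k - f'(\xs))\, s_k \;=\; -\, f'(\xs)\,(x_{k+1} - \xs) \;+\; O\bigl(\Vert x_k - \xs \Vert^2\bigr).
\end{equation*}
By Q-linearity with ratio $r < 1$ (Theorem~\ref{thm:conv_lineaire}) and ultrametricity one has $\Vert s_k \Vert = \Vert x_k - \xs \Vert$, so dividing by $\Vert s_k \Vert$ and using that $\Vert f'(\xs)^{-1}\Vert$ is bounded shows that Q-superlinearity is equivalent to the Dennis--Moré-type condition
\begin{equation*}
\frac{\Vert (B_k - f'(\xs))\, s_k \Vert}{\Vert s_k \Vert} \;\longrightarrow\; 0.
\end{equation*}

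The only a priori information on $B_k - f'(\xs)$ comes from the secant equation $B_{k+1} s_k = y_k$ together with $y_k = f'(\xs) s_k + O(\Vert s_k \Vert^2)$ from \eqref{DL2}. This immediately yields
\begin{equation*}
\frac{\Vert (B_{k+1} - f'(\xs))\, s_k \Vert}{\Vert s_k \Vert} \;\leq\; c_0\, \Vert s_k \Vert \;\longrightarrow\; 0,
\end{equation*}
so $B_{k+1}$ is automatically accurate along $s_k$, and the whole task reduces to shifting this accuracy back one index. Setting $E_k := B_k - f'(\xs)$ and $\Delta_k := y_k - f'(\xs) s_k$, the update \eqref{eqn:def_Bn} rewrites as
\begin{equation*}
E_{k+1} \;=\; E_k\,(I - s_k\, u_k\tran) \;+\; \Delta_k\, u_k\tran.
\end{equation*}
Using $\Vert u_k \Vert = \Vert s_k\Vert^{-1}$ together with the explicit form $u_k = s_{k,\ell}^{-1} e_\ell$ of~\eqref{eqn:def_simple_un}, one can track $\Vert E_n s_j \Vert / \Vert s_j \Vert$ as $n$ grows with $j$ fixed: the recursion produces bounds which couple the off-diagonal quantities ($j < n$) to the diagonal target $\Vert E_n s_n \Vert / \Vert s_n \Vert$.

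The hard part will be closing this coupled recursion. Over $\R$, Dennis and Moré break the coupling via the Frobenius identity~\eqref{eqn:eq_Frob_sur_R}, which delivers a telescoping bound on $\sum_k \bigl(\Vert E_k s_k\Vert / \Vert s_k\Vert\bigr)^2$; as emphasized by the authors, no such identity exists ultrametrically, and this is precisely the gap to bridge. A plausible fallback is to squeeze Q-superlinearity directly out of Theorem~\ref{thm:R_order}: the valuation estimate $\val(x_{w+2m} - \xs) \geq 2\, \val(x_w - \xs) - \val(\Gamma)$ forces the total valuation gain over any block of $2m$ consecutive iterations to grow without bound, and it would then suffice to rule out the pathological scenario in which this gain concentrates at a single step per block while the other $2m-1$ ratios remain bounded away from zero. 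The finite-termination result (Theorem~2.2 of~\cite{Gay79}) and the rigidity of the rank-one Broyden updates both suggest that this concentration does not occur, but a rigorous exclusion appears to require genuinely new ultrametric tools beyond those developed in Section~\ref{sec:convergence}.
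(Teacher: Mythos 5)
This statement is Conjecture~\ref{conj1}: the paper itself leaves it open and offers no proof---the discussion surrounding Eq.~\eqref{eqn:eq_Frob_sur_R} explains exactly why the archimedean Dennis--Mor\'e argument does not carry over to the ultrametric setting, and the authors only motivate their belief via Theorem~\ref{thm:R_order}. Your write-up, as you acknowledge yourself, is likewise not a proof, so there is no proof in the paper to compare it against.

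That said, the scaffolding you set up is correct and matches the authors' own analysis. The reduction of Q-superlinearity to the ultrametric Dennis--Mor\'e condition $\Vert (B_k - f'(\xs))\, s_k \Vert / \Vert s_k \Vert \to 0$ is sound: it uses $\Vert s_k\Vert = \Vert x_k - \xs\Vert$, which follows from Q-linear convergence and ultrametricity exactly as in the proof of Lemma~\ref{lem:decr_norm_fn}, together with the invertibility of $f'(\xs)$. The rewriting $E_{k+1} = E_k\,(I - s_k u_k\tran) + \Delta_k u_k\tran$ with $E_k = B_k - f'(\xs)$ and $\Delta_k = y_k - f'(\xs)s_k$ is the correct ultrametric analogue of the real-case error update. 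Your diagnosis of the obstruction---the absence of any Frobenius-type identity such as Eq.~\eqref{eqn:eq_Frob_sur_R}, hence no telescoping control of the diagonal quantities $\Vert E_k s_k\Vert / \Vert s_k\Vert$---is precisely the observation the paper makes when it explains why the conjecture is open. The fallback you sketch, namely using Theorem~\ref{thm:R_order} to force a doubling of valuation over every block of $2m$ steps and then trying to rule out concentration of that gain at a single step of the block, is the same heuristic the authors appeal to when stating they ``believe'' the conjecture, and you are right that nothing in Section~\ref{sec:convergence} excludes the concentration scenario. In short: the proposal reproduces the paper's state of knowledge faithfully, with correct reductions and a correctly identified, honestly flagged gap; it does not (and could not, with the tools at hand) resolve the conjecture.
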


\section{Finite precision}
\label{sec:precision_finie}

\subsection{Design and notations}

One remarkable feature of Newton's method in
an ultrametric context is the way it can
handle precision. 
For example, if $\pi$ is a uniformizer,
if we assume that $\Vert f'(\xs)^{-1} \Vert =1,$
$x_n$ known at precision $O(\pi^{2^n})$
is enough to obtain $x_{n+1}$ at
precision $O(\pi^{2^{n+1}}).$
To that intent, it thus suffices to double
the precision at each new iteration.
Hence  the working precision of Newton's method
can be taken to grow at the same rate as the rate of convergence.

The handling of precision is more subtle in Broyden.
This is however crucial to design efficient implementations.
Note that in the real numerical setting, most works using Broyden's methods
are employing fixed finite precision arithmetic, and do not address precision.
Additionally, the lack of a knowledge of a precise exponent of convergence requires special care, and the presence of a division
also complicates the matter.
We explain hereafter how to cope with those issues.


For simplicity, we will make the following hypotheses
throughout this section, which correspond
to the standard ones in the Newton-Hensel method.
They are that the  starting $x_0$ and $B_0$
are in a basin of convergence at least linear.
This allows us to replace any encountered
$x_n$ by its lift $\tilde{x}_n$ 
to a higher precision
(and same for $B_n$).
Indeed, $\tilde{x}_n$ will still be in the basin
of convergence and then follows
the same convergence property.
These liftings allow to mitigate the fact that
some divisions are reducing the amount
of precision so that only arbitrary added digits
are destroyed by the divisions.\footnote{This an example
of an adaptive method, which can also be used
in Newton's method when divisions occur.}

\begin{ass}
We assume that $x_0$ and $\xs$ are in $\OK$,
and that $\Vert f'(\xs) \Vert =\Vert f'(\xs)^{-1} \Vert =\Vert B_0 \Vert = \Vert B_0^{-1} \Vert=1.$
We also assume that some $\rho_1 \leq 1$ and $\rho_2 \leq 1$ are given such
that $ B(\xs, \rho_1) \times B(f'(\xs), \rho_2),$
is a basin of convergence at least linear and for any
$x \in B(\xs, \rho_1)$, and $\rho \leq \rho_1,$ $f(x+B(0, \rho))=f(x)+f'(\xs) \cdot B(0, \rho)$ (see the Precision Lemma 3.16 of \cite{caruso2017})
\label{Ass:simple_precision}
\end{ass}
The assumption on $B_0$ and $f'(\xs)$
states that they are unimodular,
which is the best one can assume regarding to conditioning
and precision. Indeed if $M \in GL_m(K)$ is unimodular ($\Vert M \Vert = \Vert M^{-1} \Vert=1$),
then for any $x \in K^m,$ $\Vert M x \Vert=\Vert x \Vert.$
Over $\Qp,$ $M \in M_m(\Zp)$ is unimodular if and only if
its reduction in $M_m(\Z/ p \Z)$ is invertible 
(and \textit{idem} for $\Qt$ and $\Q$).
The last assumption is there to provide the precision on
the evaluations $f(x_k)$'s. It is satisfied if $f \in \OK [X_1,\dots,X_m].$

\noindent {\bf Precision and complexity settings.}
Let ${\sf M}(N)$ be a superadditive upper-bound 
on the 
arithmetic complexity over the residue field of $\OK$
for the computation of the product of two elements in $\OK$
at precision $O(\pi^N)$,
and $L$ be the size of a straight-line program that computes the system $f$.
One can take ${\sf M}(N)\in O\tilde(N)$.

Working over $K$ with {\em zealous} arithmetic, the ultrametric counterpart
of interval arithmetic \cite[\S~2.1]{caruso2017},
the interval of integers
$\openr{a,\ b}$ indicates the coefficients of an element $x\in K$
represented in the computer as $x=\sum_{i=a}^{b-1} x_i \pi^i$, with $x_i\in \OK /\langle \pi \rangle$.
In this way $\val(x) = a$, its {\em absolute precision} is  $\abs(x)=b$, and
its {\em relative precision} is $\rel(x)=b-a$.
We recall the usual precision formulae, and assume in the algorithm below
that it is how the software manages zealous arithmetic (as in Magma, SageMath, Pari). See \textit{loc. cit.} for more details.
\begin{align*}
\openr{a,\ b} \times \openr{c,\ d} &= \openr{a+c,\ \min (a+d,\ b+c)} \\
\openr{a,\ b} / \openr{c,\ d} &= \openr{a-c,\ \min (a+d-2c,\ b-c)} \tag{P}\label{tag:P}
\end{align*}
The cost of multiplying  two elements of relative precision $a$ and $b$ is within $\M(\max(a,\ b))$, and to divide one by the other is in $ 4 \M (\max(a,\ b))+\max(a, \ b )$ ~\cite[Thm~9.4]{GaGe03}.

To perform changes in the precision,
we use the same notation as Magma's function for doing so.
If $x$ has interval $\openr{a,\ b}$, the (destructive) procedure
``ChangePrec(\textasciitilde $x,\ c$)'' 
either truncates $x$  to absolute precision $c$ if $c\le b$, or
lifts with zero coefficients $0 \pi^b+\cdots+0 \pi^{c-1}$
to fit the interval $\openr{a,\ c}$, if $c > b$.
The non-destructive
counterpart is denoted ``ChangePrec($x, \ c$)'' without \textasciitilde.

\subsection{Effective Broyden's method}
We start from an initial 
approximation $x_0$ at precision one, for example
given by a modular method.
The inverse of the Jacobian at precision one
provides $B_0^{-1}$. It yields a cost of $O(m^\omega)$,
but the complexity analysis of Remark~\ref{rmk:NvsB}
shows that it is negligible.
Obtaining these data is not always  obvious~\cite{FY80},
but is the standard hypothesis in the context of modular methods.
We  write $v_k=\val(f_k)$,
\medskip

\noindent {\bf In an ideal situation.}
Assume an oracle provides the valuations $v_0, v_1, v_2, \dots, v_n,\dots $
(computed by a Broyden method at arbitrarily large precision).
From this ideal situation, we derive the simple and costless modifications
required in reality. This analysis allows us
to know how efficient can a Broyden method be, which is noteworthy for
comparing it to Newton's.
The implementation of Iteration $n$ ($n=0$ included) follows the lines hereunder.
The rightmost interval indicates the output interval precision
of the object computed (following~\eqref{tag:P}),
while the middle indicates a complexity estimate.
\medskip

\noindent\underline{Input:}
\begin{enumerate*}
\item $B_n^{-1}$ has interval $\openr{ 0,\ v_n}$ and is unimodular.\\
\item {\small $x_n$ has interval $\openr{0,\ v_n+v_{n+1}}$ (non-zero entries in $\openr{0,\ v_{n-1}+v_{n}}$).}\\
\item $f_n$ has interval $\openr{v_n,\ v_n +v_{n+1}}$.
\end{enumerate*}

\medskip

\noindent\underline{Output:} \begin{enumerate*}[label=(\roman*)]
\item $B_{n+1}^{-1}$ with interval $\openr{0,\ v_{n+1}}$, 
($\val (\det (B_n^{-1}))=0$).\\
\item {\small $x_{n+1}$ in the interval $\openr{0,\ v_{n+1}+v_{n+2}}$ (non-zero entries in $\openr{0,\ v_n + v_{n+1}}$).}\\
\item $f_{n+1}$ in the interval $\openr{0,\ v_{n+1}+v_{n+2}}$.
\end{enumerate*}

\medskip

\begin{enumerate}
\item\label{enum:un} ChangePrec(\textasciitilde$B_n^{-1}, \ v_{n+1}$) ; \hfill   \qquad $\openr{0,\ v_{n+1}}$ 
\item\label{enum:sn}  $s_n \leftarrow - B_n^{-1} \cdot f_n$ ; \hfill $m^2 \M(v_{n+1})$ \qquad $\openr{v_n,\ v_n+v_{n+1}}$ 
\item\label{enum:xn+1} $x_{n+1} \leftarrow x_n + s_n$ ;\hfill $\openr{0,\ v_n+v_{n+1}}$
  \item ChangePrec(\textasciitilde$x_{n+1}, \ v_{n+1}+v_{n+2}$) ; \hfill $\openr{0,\ v_{n+1}+v_{n+2}}$
  \item\label{enum:fn_plus_un}  $f_{n+1}\leftarrow f(x_{n+1})$ ;  \\
    {\tiny .} \hfill $ L \cdot \M(v_{n+1}+v_{n+2}) $ \qquad   $\openr{v_{n+1},\ v_{n+1}+v_{n+2}}$
    \item\label{enum:valF+1} $\wb{f_{n+1}} \leftarrow$ ChangePrec($f_{n+1}, \ v_{n} + v_{n+1}$) ; \hfill $\openr{v_{n+1},\ v_{n+1} + v_n}$
    \item $h_n \leftarrow B_n^{-1} \cdot \wb{f_{n+1}}$ ;  \hfill $m^2 \M(v_{n+1})$ \qquad $\openr{v_{n+1},\ v_n + v_{n+1}}$
    \item $u_n \leftarrow $  Eq.\eqref{eqn:def_simple_un} ; \hfill   (negligible) \qquad   $\openr{-v_{n}, \  v_{n+1}-v_{n}}$
    \item $r_n \leftarrow u_{n}^T \cdot $ ChangePrec($B_{n}^{-1}, \ v_n$) ; \hfill  {\small    $m^2 \M(v_{n+1})$ \quad 
  $\openr{ -v_n  ,\ 0 }$   }
      \item ChangePrec(\textasciitilde $\wb{f_{n+1}},\ 2 v_n$) ; \hfill $\openr{v_{n+1},\ 2 v_n}$
      \item\label{enum:toDiv} ${\rm den} \leftarrow 1+r_n \cdot \wb{f_{n+1}}$ ;
        \hfill $ m \M(v_{n+1})$ \qquad $\openr{0,\  v_{n} }$
    \item\label{enum:num} ${\rm Num}\leftarrow h_n \cdot r_n$ ; \hfill $ m^2 \M(v_{n})$ \qquad $\openr{ v_{n+1}-v_n,\ v_{n+1} }$
    \item\label{enum:div} ${\rm N}_n \leftarrow {\rm Num}/{\rm den}$ ; \hfill $ 4 m^2 \M(v_{n})$ \qquad $\openr{v_{n+1}-v_n ,\ v_{n+1}}$
    \item $B_{n+1}^{-1} \leftarrow B_{n}^{-1} - {\rm N}_n$ ; \hfill $\openr{ 0,\ v_{n+1}}$
      \item\label{enum:lastB} return $B_{n+1}^{-1},\ x_{n+1},\ f_{n+1}$
\end{enumerate}

We emphasize again that thanks to the careful changes of precision
undertaken, the precisions are automatically managed
by the software, would it have zealous arithmetic implemented.
It is then immediate to check that the output verifies the specifications.
Moreover from the positive valuation of ${\rm N}_n$ it is clear
that $B_{n+1}$ is unimodular. Thus Iteration~$n+1$
can be initiated with these outputs.

\medskip

\noindent {\bf Complexity of the ideal situation.}
The arithmetic cost of Iteration $n$ is within $ (3 m^2 +m) \M(v_{n+1}) +  5 m^2 \M(v_{n}) + L\cdot  \M(v_{n+2} + v_{n+1}) $.
If we assume an exponent of convergence $\alpha>1$, \textit{i.e.}
$v_{n+1} \approx \alpha v_n$ for ``not too small'' $n$, then
the total cost to reach a precision $N\approx \alpha^{\ell+1}\approx v_{\ell+1}$
($\ell$ steps, including a $0$-th one) is upper-bounded by
\begin{equation}
  (  5 m^2 + (3 m^2 +m )\alpha^2+ L (1+\alpha)^2 \alpha^2)
\M( N/(\alpha-1))  \label{tag:I}
\end{equation}

\noindent {\bf In reality.}
Using the same notations and inputs at Iteration~$n$ as in the ideal situation above,
what changes in reality is that while $v_n$ is known $v_{n+1}$ and $v_{n+2}$ are not,
but are approximated by $\alpha v_n \ge v_{n+1}$ and $\alpha^2 v_n\ge v_{n+2}$
respectively, where $\alpha$ is fixed by the user.
Precisely,  $B_n^{-1}$ and $x_{n}$ are known at the correct
precision, but $f_n$
has an approximated interval $\openr{0,\ v_{n}+\alpha v_{n}}$.
To minimize the overhead cost it induces compared to the ideal situation,
once we know $v_{n+1}$ (Line~\ref{enum:fn_plus_un})
we insert some intermediate corrective steps denoted (5.1)-(5.5) thereafter,
between Line~\eqref{enum:fn_plus_un} and Line~\eqref{enum:valF+1};
they require no arithmetic operations.


\smallskip

(5.1) ChangePrec(\textasciitilde$B_n^{-1},\ v_{n+1}$)

(5.2) ChangePrec(\textasciitilde$s_n,\ v_n+v_{n+1}$)

(5.3) Tune $\alpha$ if necessary using the new ratio $\frac{v_{n+1}}{v_n}$

(5.4) ChangePrec(\textasciitilde$x_n,\ v_{n+1} + \alpha v_{n+1}$)

(5.5) ChangePrec(\textasciitilde$f_{n+1},\ v_{n+1} + \alpha v_{n+1}$)

\smallskip

Most importantly,
the remaining Lines~\eqref{enum:valF+1}-\eqref{enum:lastB}
are not impacted since these computations involve now the known $v_{n+1}$ (and not
the unknown $v_{n+2}$): the intervals, and thus costs obtained
are the same as in the ideal situation.
On the other hand, Lines~\eqref{enum:un}-\eqref{enum:fn_plus_un} are performed as such with an overhead
cost. Among them, only Lines~\eqref{enum:sn}, 
\eqref{enum:fn_plus_un} have a non negligible cost.
At Line~\eqref{enum:sn},  $B_n^{-1}$ has approximated interval $\openr{0,\ \alpha v_n}$,
yielding a cost of $m^2 \M(\alpha v_n)$.
At Line~\eqref{enum:fn_plus_un} $x_{n+1}$ has approximated interval
$\openr{0,\ v_n(\alpha+\alpha^2)}$, yielding a cost of $L \M(v_n (\alpha (1+\alpha)))$.
Thus the overhead cost ``ovh${}_n$'' at Iteration~$n$  is:
\begin{equation}\label{eqn:ovh}
       m^2 (\M(\alpha v_n) - \M(v_{n+1})) +
      L (\M (v_n \alpha(1 +
    \alpha)) - \M(v_{n+1}+v_{n+2}) )
\end{equation}
    This quantity depends on  the gaps $ \alpha v_n - v_{n+1} $ and
    $\alpha^2 v_n - v_{n+2}$.
    These gaps increase with $n$,
    but, thanks to the tuning of Step $(5.3)$, reasonably at a linear rate:
        \vspace{-0.75mm}
        \begin{ass}\label{ass:gap}
          The ``error gap''  $\ \ \ \vert \alpha v_n - v_{n+1} \vert  = O ( n)$.
        \end{ass}
        \vspace{-0.75mm}
        Under this assumption it is easy to (crudely)
        bound   $\sum\nolimits_{n=0}^{\ell+1} {\rm ovh}_n$  of Eq.~\eqref{eqn:ovh} by
        $(L + m^2) O(N \log(N))$. Being independent on $\alpha$   this is negligible
        in front of  $O(L + m^2)  \M(\frac{N}{\alpha-1}) $
        for $\alpha<2$. The theorem below wraps up  the considerations made above
        with Eq.~\eqref{tag:I}:
\begin{theo}\label{th:prec}
If Broyden's method has Q-order of convergence $\alpha$ on $B(\xs, \rho_1) \times B(f'(\xs), \rho_2)$,
then under Assumption \ref{Ass:simple_precision} and~\ref{ass:gap}, the cost of
computing $\xs + O(\pi^N)$ is in $O\left((m^2 +L)\right) {\sf M}\left( \frac{N }{\alpha-1}\right)$.
\end{theo}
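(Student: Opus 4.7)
The plan is to assemble the theorem from the two complexity analyses already carried out: the per-iteration cost in the ideal situation (summarized just before Eq.~\eqref{tag:I}), and the overhead estimate \eqref{eqn:ovh} for the real situation. First I would fix the number of iterations: since $v_{n+1}\approx \alpha v_n$, reaching precision $N$ requires $\ell \approx \log_\alpha N$ iterations, with $v_{\ell+1}\approx N$ and $v_n \approx N/\alpha^{\ell+1-n}$.

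Next I would bound the sum of the ideal-situation costs per iteration. The cost at Iteration~$n$ is
\[
C_n = (3m^2+m)\,\M(v_{n+1}) + 5 m^2 \, \M(v_n) + L\cdot \M(v_{n+1}+v_{n+2}).
\]
Using superadditivity of $\M$, one has $\sum_{n=0}^{\ell} \M(v_n) \le \M\bigl(\sum_{n=0}^{\ell} v_n\bigr)$, and the geometric behaviour $v_n\approx N\alpha^{n-\ell-1}$ yields $\sum_n v_n \le N\cdot\tfrac{\alpha}{\alpha-1}$, hence $\sum_n \M(v_n) \le \M\bigl(N/(\alpha-1)\bigr)$ up to an absolute constant. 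The same estimate applies to $\sum \M(v_{n+1})$ and $\sum \M(v_{n+1}+v_{n+2})$, because $\{v_{n+1}+v_{n+2}\}$ is again a geometric sequence with ratio $1/\alpha$ and total bounded by $O(N/(\alpha-1))$. Grouping the coefficients reproduces the factor $5m^2 + (3m^2+m)\alpha^2 + L(1+\alpha)^2\alpha^2$ of Eq.~\eqref{tag:I}, and absorbing the $\alpha$-dependent constants into the $O(\cdot)$ (which is legitimate since we only claim an $O$-bound and $\alpha$ is fixed by the user) gives a total ideal cost within $O\bigl((m^2+L)\,\M(N/(\alpha-1))\bigr)$.

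It then remains to show that the overhead $\sum_{n=0}^{\ell}\mathrm{ovh}_n$ of Eq.~\eqref{eqn:ovh} is dominated by this bound. This is the main technical point. By Assumption~\ref{ass:gap}, $|\alpha v_n - v_{n+1}| = O(n)$ and similarly $|\alpha^2 v_n - v_{n+2}| = O(n)$, so each overhead term is of the form $(m^2+L)\bigl(\M(v_{n+1} + O(n)) - \M(v_{n+1})\bigr)$. With $\M(x)\in \tilde O(x)$ and using the quasi-linearity of $\M$, each difference is $\tilde O(n)$, and the total telescopes into $(m^2+L)\cdot O(N\log N)$ since $\ell = O(\log N)$. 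This is negligible compared to $(m^2+L)\,\M(N/(\alpha-1)) = (m^2+L)\cdot\tilde O(N/(\alpha-1))$ for $\alpha<2$ (and for $\alpha\ge 2$ the analysis is even more favourable, since fewer iterations are needed). Adding the one-time cost $O(m^\omega)$ for $B_0^{-1}$ at precision one, which is absorbed in $O(m^2)\cdot \M(N/(\alpha-1))$ as observed in the discussion preceding the statement, concludes the proof.

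The hard part is really the overhead bound: one has to check that the per-step precision mismatch produced by not knowing $v_{n+1}$ and $v_{n+2}$ in advance does not accumulate faster than linearly in $n$, which is exactly the content of Assumption~\ref{ass:gap} and justifies the tuning step $(5.3)$; everything else is a routine assembly of the per-iteration estimates with a geometric sum.
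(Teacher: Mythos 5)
Your proposal is correct and follows essentially the same route as the paper: it assembles the per-iteration ideal cost from the line-by-line table (giving Eq.~\eqref{tag:I} via the geometric sum and superadditivity of $\M$), bounds the overhead $\sum_n {\rm ovh}_n$ by $(m^2+L)\,O(N\log N)$ using Assumption~\ref{ass:gap}, and observes that this and the one-time $O(m^\omega)$ cost are absorbed into the stated bound. One small remark: the word ``telescopes'' is a misnomer (nothing cancels; you are just summing $\tilde O(n)$ over $\ell=O(\log N)$ steps, which in fact yields a bound far smaller than $O(N\log N)$), but the conclusion you draw is the same crude estimate the paper itself uses, so this does not affect correctness.
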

\begin{rmk}\label{rmk:NvsB}
  Understanding the $Q$-order of convergence is a major and notoriously difficult
  problem in the numerical analysis community. Numerical evidence shows it deteriorates with $m$,
  and is larger than $2^{1/2m}$ (Theorems~\ref{thm:R_order}-\ref{th:bigone}). Some experiments suggest that
  taking $\alpha \approx 2^{1/m}$ is not unreasonable.
  We then
  get a cost in $O\left((m^2 +L){\sf M}\left( \frac{N }{\alpha-1}\right)\right) \approx  O\left((m^2 +L){\sf M}\left( Nm\right)\right)$.
For comparison, denoting $\omega<3$ the exponent
of the cost of matrix product, the standard analysis
of Newton's method for rational fractions would lead
to $O\left((m^\omega + m L){\sf M}\left(N\right)\right)$.
Consequently, in this setting, for large $m$, there is little hope that
Broyden's method can outperform Newton's when both are available.
Remember though other worthwile
applications in the paragraph 
{\em ``Motivations''} in Introduction.
\end{rmk}
\section{Numerical data}
\label{sec:data}

An implementation of our ultrametric Broyden method
in Magma \cite{magma} with more data is available at \url{http://xdahan.sakura.ne.jp/broyden20.html}.
We report the data obtained  using the three families of
systems, derived from
page 36 of \cite{Lecerf2001:phd}. The families are
indexed by $t \in \pi \OK$:
{\small
\begin{itemize}[leftmargin=*]
\item $F_1= \big( (x_1-1)^2 + (x_2-1)^2 - 4 -t x_1x_2 -t^2x_1 ,
(x_1 +1 )^2 +(x_2 +1)^2 -4 -t x_1 \big)$ in $K[x_1,x_2]$.
\item $F_2= \big( (x_1-1)^2 + (x_2-1)^2 +(x_3-1)^2 - 5 -t -t^2 ,
(x_1 +1 )^2 +(x_2 +1)^2 +(x_3+1)^2 -5 -t,
2x_1^2 + x_2^2 + x_3^2 -3 - t^2 \big)$ in $K[x_1,x_2, x_3]$. 
\item $F_3= \big( (x_1-1)^2 + (x_2-1)^2 +(x_3-1)^2 +(x_4-1)^2 - 8 -t -t^2 ,
(x_1 +1 )^2 + (x_2 +1)^2 + (x_3 + 1)^2 + (x_4 + 1 )^2 - 8 - t ,
2\,x_1^2 + x_2^2 + x_3^2 + x_4^2 -5 - t^2 ,
2\,x_1\,x_2 + x_3\,x_2 - 2\, x_3\,x_4 + 2\,x_4\,x_1 +3 - t^2 \big)$ in $K[x_1,x_2, x_3,x_4]$.
\end{itemize}
}
Valuation of $f(x_k)$ and numerical estimation of the order of Q-convergence for $\Qt$ are
compiled in the following graphic. For $K=\Qp,$ and $\Fp \llbracket t \rrbracket$ with $p=17$
we experienced the same behaviour.
\begin{center}
  \includegraphics[height=3.4cm]{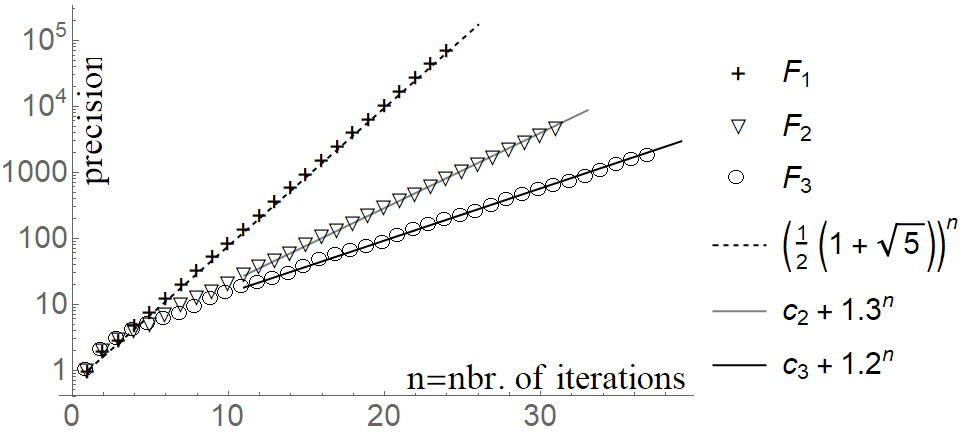}
\end{center}
\vspace{-5mm}

  \bibliographystyle{plain}
\bibliography{biblio}

\end{document}